\numberwithin{equation}{section}
\theoremstyle{plain}
\definecolor{fadeblue}{RGB}{0,57,128}
\def\fadeblue{\color{fadeblue}}
\patchcmd{\section}{\normalfont}{\normalfont \fadeblue}{}{}
\patchcmd{\subsection}{\normalfont}{\normalfont \fadeblue}{}{}
\patchcmd{\subsubsection}{\normalfont}{\normalfont \fadeblue}{}{}
\newcommand{\N}{{\mathbb N}}
\newcommand{\Z}{{\mathbb Z}}
\newcommand{\R}{{\mathbb R}}
\newcommand{\E}{{\mathcal E}}
\newcommand{\F}{{\mathcal F}}
\newcommand{\A}{{\mathcal A}}
\newcommand{\al}{\alpha}
\newcommand{\D}{\mathcal{D}}
\newtheorem{theorem}{Theorem}[section]
\newtheorem{proposition}{Proposition}[section]
\newtheorem{lemma}{Lemma}[section]
\newtheorem{remark}{Remark}[section]
\begin{document}
	\title{Log-Hausdorff multifractality of the absolutely continuous spectral measure of the almost Mathieu operator}

	\author{Jie Cao}
    \address{Chern Institute of Mathematics and LPMC, Nankai University, Tianjin 300071, P. R. China.}
	\email{caojie@nankai.edu.cn}

\author{Xianzhe Li}
\address{Department of Mathematics, University of California, Berkeley, CA 94720, USA} 
 \email{xianzhe@berkeley.edu}
    
% \address{Chern Institute of Mathematics and LPMC, Nankai University, Tianjin 300071, China} 
% \email{xianzheli@mail.nankai.edu.cn}
 
\author{Baowei Wang}
\address{School of Mathematics and Statistics, Huazhong University of Science and Technology, Wuhan 430074, China} 
\email{bwei\_wang@hust.edu.cn}

 \author{Qi Zhou}
 \address{Chern Institute of Mathematics and LPMC, Nankai University, Tianjin 300071, China}
 \email{qizhou@nankai.edu.cn}

%	\address[]{}
%	\email{}

	\begin{abstract}
	This paper focuses on the fractal characteristics of the absolutely continuous spectral measure of the subcritical almost Mathieu operator (AMO)  and Diophantine frequency. In particular, we give a complete description of the (classical) multifractal spectrum and a finer description in the logarithmic gauge. The proof combines continued–fraction/metric Diophantine techniques and refined covering arguments. These results rigorously substantiate (and quantify in a refined gauge) the physicists' intuition that the absolutely continuous component of the spectrum is dominated by energies with trivial scaling index, while also exhibiting nontrivial exceptional sets which are negligible for classical Hausdorff measure but large at the logarithmic scale.

    \end{abstract}
	
	\maketitle	
	
	\section{Introduction}
	In the 1980s, there emerged an almost-periodic flu in the study of the Sch\"odinger operator, which sweep the world \cite{simon1982almost}. The most extensively studied model is the almost Mathieu operator (also known as the Aubry-Andr\'e model in the physical literature):
	\begin{equation*}\label{AMO}
		(H_{\lambda,\alpha,\theta}u)(n) = u(n-1) + u(n+1) + 2\lambda\cos(2\pi(n\alpha + \theta))u(n),
	\end{equation*} where $\lambda \in \mathbb{R}$ represents the coupling, $\alpha \in \mathbb{R}$ denotes the frequency (typically irrational), and $\theta \in \mathbb{R}$ is the phase. The spectrum $\Sigma_{\lambda,\alpha}$ is a compact subset of $\mathbb{R}$ that is independent of $\theta$. The almost Mathieu operator (AMO), named by B. Simon, models electrons on a two-dimensional lattice subjected to a perpendicular magnetic field \cite{P1933}. It is of significant interest due to both its physical relevance and the remarkable complexity of its associated spectral theory \cite{aj,ayz,last1995almost,last2005spectral,rauh1974degeneracy,TKNN}.
	
	Indeed, the AMO has been most notably studied for its fractal spectrum, famously visualized as Hofstadter's butterfly. It was popularized by Simon as the ``Ten Martini Problem" \cite{Kac,simon}, which asserts that $H_{\lambda,\alpha,\theta}$ possesses a Cantor spectrum for any $\lambda \neq 0$ and $\alpha \in \mathbb{R} \setminus \mathbb{Q}$. This assertion was ultimately proven by Avila and Jitomirskaya \cite{aj}, with additional contributions from earlier studies \cite{ak,bellisimon,cey,hs,last,puig}. However, it remains an open question whether the ``Dry Ten Martini Problem" (which posits that all spectral gaps are open) holds \cite{ayz2,Kac,simon}. Another significant problem is determining the Hausdorff dimension of the spectrum of the critical almost Mathieu operator (where $\lambda = 1$). B. Simon included this problem in his new list of significant unsolved problems \cite{FiftyyearSimon}, and recent advances on this topic can be consulted in \cite{ALSZ2024abominable,HLQZ2019Positive,HLQZ2025,JKCritical}.
	
	The aforementioned results pertain to the fractal nature of the spectrum. Recently, there has been a growing interest in exploring the fractal characteristics of the spectral measure, which constitutes the primary focus of this paper. For the case where $ \lambda > 1 $,  $H_{\lambda,\alpha,\theta}$  has pure point spectrum for a.e. $\alpha$ and a.e. $\theta$ \cite{jitomirskaya1999metal}, and recently Jitomirskaya and Liu \cite{jitomirskaya2018universal,jLiu1} investigate the universal hierarchical structure of quasiperiodic eigenfunctions. Conversely, when $ \lambda < 1 $, the spectrum is purely absolutely continuous \cite{avila2008absolutely}, with additional earlier contributions \cite{eliasson,aj1,AD2008Absolute}.   Motivated by the conjecture of Tang and Kohmoto \cite{TK1986Global}, the precise local distribution of this absolutely continuous spectral measure has been studied recently \cite{LYZ}.
	
	To elucidate this, let $ \mu $ be a compactly supported Borel probability measure on $ \mathbb{R} $. For a given $ x \in \mathbb{R} $, the lower and upper local dimensions of $ \mu $ at $ x $ are defined as follows:
	\begin{equation}\label{def-loc-dim}
		\underline{d}_{\mu}(x) := \liminf_{r \to 0} \frac{\log \mu(B(x,r))}{\log r}, \quad \overline{d}_{\mu}(x) := \limsup_{r \to 0} \frac{\log \mu(B(x,r))}{\log r},
	\end{equation}
	where $ B(x,r) $ denotes the closed ball in $ \mathbb{R} $ with radius $ r $ centered at $ x $. If $ \underline{d}_{\mu}(x) = \overline{d}_{\mu}(x) := d_{\mu}(x) $, then $ d_{\mu}(x) $ is referred to as the local dimension (or scaling index) of $ \mu $ at $ x $.	
	With this framework established, we can address the conjecture. Denote by $\mu = \mu_{\lambda, \alpha, \theta}$ the spectral measure corresponds to $ H_{\lambda,\alpha,\theta} $. In 1986, physicists Tang and Kohmoto \cite{TK1986Global} conjectured for the absolutely continuous spectrum (extended states) that:
	\begin{quote}
		``An absolutely continuous spectrum is dominated by points with a `trivial' scaling index $ d_{\mu}(E) = 1 $ and a fractional dimension of 1. It may contain a finite or countably infinite number of singularities with $ d_{\mu}(E) \neq 1 $, possibly van Hove singularities.''
	\end{quote}
	
	Let $\mathcal{N}(E) := n\big((-\infty, E]\big)$ denote the integrated density of states (IDS), where $n = \int_{\theta} \mu_{\lambda, \alpha, \theta} \, d\theta$ is the density of states measure. Li, You and Zhou \cite{LYZ} proved that if $\lambda<1$ and $ \alpha \in \mathrm{DC} := \bigcup_{\gamma > 0,\, \tau > 1} \mathrm{DC}(\gamma, \tau) $ is Diophantine, where $$
	\mathrm{DC}(\gamma, \tau) := \left\{ x \in \mathbb{R} : \|kx\|_{\mathbb{R}/\mathbb{Z}} \geq \frac{\gamma}{|k|^\tau} \ \text{for all } k \in \mathbb{Z} \setminus \{0\} \right\},
	$$ the following results hold:
	\begin{itemize}
		\item If $ \mathcal{N}(E)= k\alpha\mod \Z $, then $\underline{d}_{\mu}(E) = \overline{d}_{\mu}(E)=\frac{1}{2}$.
		\item If $ \mathcal{N}(E)\neq  k\alpha\mod \Z $, then
		$$\underline{d}_{\mu}(E) \in [1/2,1], \quad \overline{d}_{\mu}(E)=1. $$
	\end{itemize}

    Taken together, these results suggest that the absolutely continuous   spectral measure \(\mu\) is governed by the behaviour of its lower pointwise (local) dimension.  To describe this local behaviour precisely we study the level sets
\[
\Sigma_{\lambda,\alpha}(\beta)
:=\bigl\{E\in\Sigma_{\lambda,\alpha}:\ \underline d_\mu(E)=\beta\bigr\}.
\]
Multifractal analysis provides the natural language for this study: the family \(\{\Sigma_{\lambda,\alpha}(\beta)\}_{\beta}\) decomposes the spectrum according to local dimensions and reveals the fractal geometry of the exceptional energies where \(\mu\) deviates from the typical  trivial scaling.  In particular, measuring the size of these level sets (via Hausdorff or logarithmic–Hausdorff measures) quantifies how common each type of local behaviour is.

	\subsection{Multifractal formalism}
Historically, multifractal analysis developed from physicists' heuristics into a rigorous mathematical discipline.  The first influential, systematic exposition is usually attributed to Mandelbrot \cite{M1974}, where he proposed that the bulk of energy dissipation in turbulent flows is concentrated on a subset of \(\mathbb R^3\) of fractional dimension.  Since that seminal work, multifractal ideas have been widely pursued in both physics and mathematics.  Modern multifractal theory provides a compact language and robust tools—local dimensions, the multifractal spectrum, mass–distribution methods, thermodynamic formalism and ubiquity techniques—to dissect measures with nonuniform scaling.  These methods have found applications across turbulence, dynamical systems, geometric measure theory, signal analysis and spectral theory; conversely, empirical problems continue to motivate new rigorous developments.  For comprehensive introductions and further reading we refer the reader to the monographs \cite{BPPV1984,Fal2014Fractal,MCS1986,FP1985}.

	In our  context,  the multifractal analysis aims to examine the multifractal spectrum (or the $ f_\mu(\beta) $-spectrum) of the measure $ \mu $ \cite{Pesin}, defined by
	$$
	f_\mu(\beta) = \dim_H(\Sigma_{\lambda, \alpha}(\beta)),
	$$ where $ \dim_H(S) $ denotes the Hausdorff dimension of the set $ S $. This analysis is often sufficient to reveal the underlying fine structure of the measure (see for examples \cite{Bowen,Cawley,Lau,Olsen,Pesin1} for classic results).  Referring back to Tang and Kohmoto's initial conjecture \cite{TK1986Global}, we are particularly interested in determining the dimension of the set of energies $ E $ for which $ d_{\mu}(E) \neq 1 $. This paper aims to address these questions.

	\begin{theorem}\label{thm:stdhausdorff}
		Assume that $0<\lambda<1$ and $\alpha\in DC$. Let $ f_\mu(\beta) $ be defined as above; then we have
		\[
		f_\mu(\beta) =\begin{cases}
			1, \quad&\beta=1,\\
			0, \quad&\beta\in[1/2,1).
		\end{cases}
		\]
	\end{theorem}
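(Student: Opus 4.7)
The plan is to split the proof into the full-dimensional case $\beta=1$ and the exceptional case $\beta\in[1/2,1)$. The first is essentially immediate from absolute continuity; the second is the main content and is handled by realizing the level sets as limsup sets of small balls around the distinguished energies $E_k$ and applying a Diophantine covering argument.

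For $\beta=1$: by \cite{avila2008absolutely} the spectral measure $\mu$ is purely absolutely continuous when $0<\lambda<1$, and the spectrum $\Sigma_{\lambda,\alpha}$ has positive Lebesgue measure. The Lebesgue differentiation theorem applied to the density $d\mu/dE$ yields $d_\mu(E)=1$ for Lebesgue-a.e.\ $E$ in $\{d\mu/dE>0\}\subset\Sigma_{\lambda,\alpha}$, so $\Sigma_{\lambda,\alpha}(1)$ contains a set of positive Lebesgue measure, forcing $f_\mu(1)=1$.

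For $\beta\in[1/2,1)$, since $\{E:\underline d_\mu(E)=\beta\}\subseteq\{E:\underline d_\mu(E)<1\}$, it suffices to show that $S:=\{E\in\Sigma_{\lambda,\alpha}:\underline d_\mu(E)<1\}$ has Hausdorff dimension zero. Write $\mathcal E:=\{E_k:k\in\Z\}$ for the distinguished energies where $\mathcal N(E_k)=k\alpha\mod\Z$. The key quantitative step is to refine the two-scale estimate of \cite{LYZ} for $\mu(B(E,r))$ into a pointwise bound relating $\underline d_\mu(E)$ to how well $E$ is approximated by $\mathcal E$: for every $\varepsilon>0$,
\[
\{E:\underline d_\mu(E)\le 1-\varepsilon\}\ \subseteq\ \limsup_{|k|\to\infty} B\bigl(E_k,\rho_k(\varepsilon)\bigr),
\]
where $\rho_k(\varepsilon)\to 0$ decays faster than any polynomial in $|k|$, the super-polynomial rate being forced by the Diophantine hypothesis combined with the H\"older regularity of $\mathcal N$. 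Informally, $\underline d_\mu(E)<1$ compels $E$ to experience the anomalous $\sqrt r$-scaling $\mu(B(E,r))\asymp\sqrt r$ at an infinite sequence of scales $r_n\to 0$, which by the Li--You--Zhou formula forces $E$ to lie within $\rho_{k_n}(\varepsilon)$ of some $E_{k_n}$ for infinitely many indices $k_n$.

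Once this resonance characterization is established, the Hausdorff dimension estimate is routine. The Diophantine condition $\|k\alpha\|_{\R/\Z}\ge \gamma|k|^{-\tau}$ together with H\"older continuity of $\mathcal N^{-1}$ yields a polynomial lower bound $|E_k-E_{k'}|\gtrsim |k-k'|^{-\tau'}$ for some $\tau'=\tau'(\tau)$, confirming that the centers $\{E_k\}$ are Diophantine-separated. The Hausdorff--Cantelli lemma then gives $\mathcal H^s(\limsup B(E_k,\rho_k(\varepsilon)))\le \limsup_{N\to\infty}\sum_{|k|\ge N}(2\rho_k(\varepsilon))^s=0$ for every $s>0$, since $\rho_k$ decays super-polynomially in $|k|$, and a countable union over $\varepsilon=1/n$ preserves $\dim_H=0$. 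The main obstacle is the quantitative resonance analysis itself: extracting the pointwise bound linking $\underline d_\mu(E)$ to the approximation rate of $E$ by $\mathcal E$, which requires combining the two-scale estimate with the continued-fraction expansion of $\alpha$ (tracking the scales $\|q_n\alpha\|$) and the fine gap structure coming from subcritical reducibility. The subsequent passage from the resonance description to zero Hausdorff dimension is then a standard covering computation.
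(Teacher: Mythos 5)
Your proposal is correct and follows the same skeleton as the paper: absolute continuity plus Lebesgue differentiation for $\beta=1$, and a resonance/limsup covering for $\beta\in[1/2,1)$. Two remarks on the second part. First, the ``key quantitative step'' you flag as the main obstacle --- the pointwise bound linking $\underline d_\mu(E)$ to the approximation rate of $\mathcal N(E)$ by the orbit $\{k\alpha\}$ --- does not need to be re-derived: it is exactly the identity \eqref{Xmu-beta}, quoted from \cite{LYZ}, which says $\Sigma_{\lambda,\alpha}(\beta)=F\bigl(\tfrac{\beta\log\lambda}{1-2\beta}\bigr)$; since $\delta(\beta)\ge -\log\lambda>0$ for all $\beta\in[1/2,1)$, the radii in your limsup covering are in fact \emph{exponentially} small in $|k|$, not merely super-polynomial, and the Diophantine separation of the centers plays no role in the upper bound. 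Second, the routes diverge in how the covering is executed: the paper deduces $\dim_H=0$ as a corollary of the finer log-gauge statement (Theorem \ref{DtoF}), transporting everything to the circle via Proposition \ref{DtoF'} and running Borel--Cantelli on $D(\delta)\subset B(\eta)$ there, whereas you work directly in energy space with a power-gauge Hausdorff--Cantelli argument --- which is indeed sufficient for Theorem \ref{thm:stdhausdorff} alone and avoids the log-Hausdorff machinery. The one technical point you gloss over is the pullback through $\mathcal N^{-1}$: since $\mathcal N$ is constant on spectral gaps and Proposition \ref{N-continuous} only bounds \emph{symmetric two-sided} increments from below, the set $\{E\in\Sigma_{\lambda,\alpha}:\|\mathcal N(E)-k\alpha\|\le\epsilon_k\}$ is not a single ball around a point $E_k$ but is contained in two intervals of length $O(\epsilon_k^{2/3})$ attached to the edges of the gap $G_k$; this is precisely the Case 1/Case 2 splitting in the paper's proof of Proposition \ref{DtoF'}, and with that adjustment your covering computation goes through unchanged.
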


	Theorem \ref{thm:stdhausdorff} demonstrates that the classical power-law gauge function is insufficiently precise to differentiate between the level sets $ \Sigma_{\lambda, \alpha}(\beta)$ for $ \beta \in [1/2, 1) $. To capture the finer structure of these sets, it is essential to adopt a more refined gauge function.
	
	Let $ \omega: [0, 1] \to [0, \infty) $ be defined as a gauge function, which is an increasing function satisfying $ \omega(0) = 0 $. Typical examples include the classical power-law gauge function $ r^s $ and $$
	\omega_s(r) =
	\begin{cases}
		(-\log r)^{-s}, & \text{if } 0 < r <1, \\
		0, & \text{if } r = 0.
	\end{cases}
	$$ The $ \omega $-Hausdorff measure $ \mathcal{H}^{\omega}(S) $ of a set $ S \subset \mathbb{R} $ is defined as $$
	\mathcal{H}^{\omega}(S) = \lim_{\varepsilon \to 0^+} \mathcal{H}_{\varepsilon}^{\omega}(S),
	$$ where $$
	\mathcal{H}^{\omega}_{\varepsilon}(S) := \inf \left\{ \sum_{i=1}^{\infty} \omega(b_i - a_i) : S \subset \bigcup_{i=1}^{\infty} (a_i, b_i), \ b_i - a_i \leq \varepsilon \right\},
	$$ while the $\log$-Hausdorff dimension is defined as $$
		\dim_{H, \log}(S):=\inf\{s>0: \mathcal{H}^{\omega_s}(S)<\infty\}.
		$$
        
        The concept of $\log$-Hausdorff dimension is particularly interesting in the context of Schr\"odinger operators. It is established by Bourgain-Klein \cite{BourgainKlein2013} and  Craig-Simon  \cite{CraigSimon1983}, that the spectrum of any one-dimensional discrete Schr\"odinger operator possesses a positive $\omega_1$-Hausdorff measure. Furthermore, Avila-Last-Shamis-Zhou \cite{ALSZ2024abominable} demonstrate that this result cannot be improved, even for the almost Mathieu operator. Additional abominable properties of the almost Mathieu operator can be explored within the framework of the $\omega_s$ gauge category \cite{ALSZ2024abominable}. In this paper, we determine the $w_s$-Hausdorff measure of the level set $\Sigma_{\lambda, \alpha}(\beta)$ by showing the following zero-infinity dichotomy.

	\begin{theorem}\label{DtoF}
		Let $0<\lambda<1$ and $\alpha\in DC$. For any $ \beta \in [1/2, 1) $, $$
		\mathcal{H}^{\omega_s}(\Sigma_{\lambda, \alpha}(\beta))=
		\begin{cases}
			0, & \hbox{for $s>1$,} \\
			\infty, & \hbox{for $s\le 1$.}
		\end{cases}
		$$ and so $\dim_{H, \log}(\Sigma_{\lambda, \alpha}(\beta))=1.$ 
	
	\end{theorem}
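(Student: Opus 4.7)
The plan is to transfer the question to metric Diophantine approximation via the integrated density of states $\mathcal N$, and then apply continued-fraction techniques together with a Cantor-type mass-distribution construction adapted to the logarithmic gauge $\omega_s$. I rely on the quantitative local analysis of $\mu$ developed in \cite{LYZ} and used for Theorem~\ref{thm:stdhausdorff}: for $E$ with $\mathcal N(E)\neq k\alpha\pmod{\Z}$, the lower local dimension $\underline d_\mu(E)\in[1/2,1)$ is governed by the sequence of distances $\|\mathcal N(E)-k\alpha\|$ as $|k|\to\infty$. I expect this control to sharpen into an identification (modulo the countable orbit $\{E_k\}$) of $\Sigma_{\lambda,\alpha}(\beta)$ with $\mathcal N^{-1}(W_\beta)$, where $W_\beta\subset\mathbb T$ is a limsup set of points approximable by $\{k\alpha\}_{k\in\Z}$ at a rate $\psi_\beta(|k|)$ that is exponentially fast (morally $\psi_\beta(|k|)\asymp e^{-c(\beta)|k|}$). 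The exponential nature of $\psi_\beta$ is precisely what makes $W_\beta$ have classical Hausdorff dimension zero—compatible with Theorem~\ref{thm:stdhausdorff}—yet non-trivial at the logarithmic scale.

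\emph{Upper bound} $\mathcal H^{\omega_s}(\Sigma_{\lambda,\alpha}(\beta))=0$ for $s>1$: use the natural tail cover $W_\beta\subset\bigcup_{|k|\ge N} B(k\alpha,\psi_\beta(|k|))$. Its $\omega_s$-cost is $\sum_{|k|\ge N}(c|k|)^{-s}$, which converges for $s>1$ and tends to $0$ as $N\to\infty$. Transferring this cover to $\Sigma_{\lambda,\alpha}(\beta)$ via the Hölder-continuous $\mathcal N$ incurs only a polynomial distortion of scales, which is absorbed on taking logarithms. This yields $\mathcal H^{\omega_s}(\Sigma_{\lambda,\alpha}(\beta))=0$ for every $s>1$.

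\emph{Lower bound} $\mathcal H^{\omega_s}(\Sigma_{\lambda,\alpha}(\beta))=\infty$ for $s\le 1$: I would build a Cantor subset $K_\beta\subset\Sigma_{\lambda,\alpha}(\beta)$ driven by the continued-fraction denominators $q_n$ of $\alpha$: at stage $n$ keep the intervals centred at orbit points $k\alpha$ with $q_{n-1}<|k|\le q_n$ of length $\asymp\psi_\beta(q_n)$, together with a separation rule excluding super-approximations at faster rates. Equidistributing mass across level-$n$ cells produces a probability measure $\nu$ on $K_\beta$ satisfying $\nu(B(x,r))\lesssim\omega_1(r)$ uniformly in $r$, and the mass-distribution principle then gives $\mathcal H^{\omega_1}(K_\beta)>0$. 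To upgrade this to $\mathcal H^{\omega_1}(\Sigma_{\lambda,\alpha}(\beta))=\infty$, one exploits the Cantor structure of the spectrum (by the Ten Martini theorem of \cite{aj}) to run the construction inside countably many disjoint subintervals and add the resulting contributions. The inclusion $\omega_s\ge\omega_1$ for $s\le 1$ then gives $\mathcal H^{\omega_s}=\infty$ throughout this range, and the zero–infinity dichotomy yields $\dim_{H,\log}\Sigma_{\lambda,\alpha}(\beta)=1$.

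\emph{Main obstacle}: ensuring that every $E\in K_\beta$ has $\underline d_\mu(E)=\beta$ exactly, not merely $\underline d_\mu(E)\le\beta$. This demands two-sided Diophantine control inside the Cantor construction: enough approximations at rate $\psi_\beta$ (for the upper bound on $\underline d_\mu$) \emph{and} a separation rule excluding faster approximations (for the lower bound). The Diophantine input $\|k\alpha\|\ge\gamma|k|^{-\tau}$ from $\alpha\in\mathrm{DC}$ provides the quantitative means for this separation, but translating it—via the LYZ local analysis—into matching scale-by-scale bounds on $\mu(B(E,r))$ is the most delicate step, and the place where the bulk of the technical work will concentrate.
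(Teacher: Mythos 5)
Your overall architecture coincides with the paper's: reduce via the LYZ identity $\Sigma_{\lambda,\alpha}(\beta)=F\bigl(\tfrac{\beta\log\lambda}{1-2\beta}\bigr)$ and the two‑sided H\"older bounds on $\mathcal N$ to an exact‑resonance level set $D(\delta)$ on the circle; kill the $s>1$ case by a tail cover/Borel--Cantelli argument; and produce the $s\le 1$ lower bound by a continued‑fraction‑driven Cantor construction with a mass distribution satisfying $\nu(B(x,r))\lesssim\omega_1(r)$. The upper bound and the transfer step are essentially right as you describe them (the paper's Proposition~\ref{DtoF'} makes the ``polynomial distortion absorbed by the logarithm'' precise, with the extra care that $\mathcal N$ is constant on gaps, so preimages of intervals must be split into two pieces around a gap).

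The genuine gap is that the step you yourself flag as the ``main obstacle'' is where essentially all of the paper's work lies, and your proposal does not resolve it. Membership in $D(\delta)$ is an \emph{exact} level‑set condition, so the Cantor set must at every stage (i) place points inside $B(n\alpha,e^{-n\delta})$ for a positive proportion of orbit points $n$ in each continued‑fraction block $[q_k/2,q_k)$, and simultaneously (ii) keep them at distance $\ge c\,e^{-m\delta}$ from \emph{every} other orbit point $m$ in the intervening range. The paper achieves this by working with annuli $A(n)=B(n\alpha,e^{-n\delta})\setminus B(n\alpha,ce^{-n\delta})$ and pruning, via a volume/counting argument against the separation bound $\|n\alpha-m\alpha\|>1/(2q_k)$ and the discrepancy estimate, those $n$ whose annuli meet forbidden balls — and this separation input is the continued‑fraction Lemma~\ref{l3}, valid for \emph{all} irrationals, not the condition $\alpha\in\mathrm{DC}$ you invoke (DC enters only through the H\"older regularity of $\mathcal N$, which is why Theorem~\ref{F} holds for every irrational $\alpha$ while Theorem~\ref{DtoF} does not). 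Finally, your upgrade from $\mathcal H^{\omega_1}(K_\beta)>0$ to $\infty$ by summing over countably many disjoint subintervals needs a uniform lower bound on each contribution, which you do not supply; the paper instead builds a free sparseness parameter $a_0$ into the first level so that the mass‑distribution constant is $2^{32}a_0$, and lets $a_0\to0$. The Ten Martini theorem plays no role in either route.
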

\begin{remark}  
     Intuitively, classical power–law gauges average away the contribution of rare, strong resonances which occur at logarithmically small scales; the logarithmic gauge $\omega_s(r)=(-\log r)^{-s}$ is finely tuned to detect and measure these resonances, which is why it reveals a nontrivial multifractal structure invisible to power–law analysis.
\end{remark}

	\subsection{Ideas of the proof and reduction to a Diophantine approximation problem}

The analysis proceeds by linking lower dimension $\underline{d}_{\mu}$  of the spectral measure to the strength of resonances of the IDS $\mathcal{N}(E)$.  Two input facts are essential:

\begin{enumerate}
  \item  H\"older regularity of the IDS.  This regularity controls how small neighborhoods of an energy can be while still capturing a prescribed amount of density.
  \item A quantitative relation between the resonance strength of $\mathcal{N}$ along the frequency orbit $\{k\alpha\}$ and the lower local dimension of the spectral measure $\mu$.  Roughly speaking, large near-resonances of $\mathcal{N}$ at energies $E$ induced by rational approximations of $\alpha$ force $\mu(B(E,r))$ to be smaller than a pure power of $r$, producing reduced lower local dimension.
\end{enumerate}
Using these two ingredients we reduce the multifractal analysis to a purely Diophantine approximation problem for the circle. Let's explain this in details. 
    
	 For any $\varphi \in [0,1]$, define the resonance strength
	\begin{equation*}\label{def-delta}
		\delta(\alpha, \varphi) = \limsup_{|k| \to \infty} -\frac{\log \| \varphi - k\alpha \|_{\mathbb{R}/\mathbb{Z}}}{|k|},
	\end{equation*}
	with the convention that $\delta(\alpha, \varphi) = \infty$ if $\varphi \equiv k\alpha \mod \mathbb{Z}$ for some $k \in \mathbb{Z}$.
 For any $0 < \delta \leq \infty$, define the level set of $\Sigma_{\lambda,\alpha}$ as
	\[
	F(\delta) := \left\{ E \in\Sigma_{\lambda,\alpha} : \delta(\alpha, \mathcal{N}(E)) = \delta \right\}.
	\]
   As proved by  \cite{LYZ}
	for $\beta \in [1/2, 1)$, the lower level set $\Sigma_{\lambda, \alpha}(\beta)$ satisfies
	\begin{equation}\label{Xmu-beta}
		\Sigma_{\lambda, \alpha}(\beta) = F\left( \frac{\beta \log \lambda}{1 - 2\beta} \right).
	\end{equation}
	This motivates the study of the auxiliary set
	\begin{equation*}\label{D-def}
		D(\delta) := \left\{ x \in [0,1] : \delta(\alpha, x) = \delta \right\} = \Big\{ x \in [0,1] : \limsup_{|k| \to \infty} -\frac{\log \| x - k\alpha \|_{\mathbb{R}/\mathbb{Z}}}{|k|} = \delta \Big\},
	\end{equation*}
	which satisfies $D(\delta) = \mathcal{N}(F(\delta))$ since $\mathcal{N}$ is a continuous, non-decreasing surjective function.
	
We also recall a well-known from fractal geometry.
	For the power-law gauge function $\omega$ and a bi-Lipschitz function $f : \mathbb{R} \to \mathbb{R}$ (i.e., there exists $s > 0$ such that for all $x, y \in \mathbb{R}$,
	$
	s^{-1} |x - y| \leq |f(x) - f(y)| \leq s |x - y|
	$),
	the Hausdorff dimension satisfies $\dim_H(S) = \dim_H(f(S))$ for any $S \subset \mathbb{R}$ \cite{Fal2014Fractal}. However, the integrated density of states $\mathcal{N}$ is generally only H\"older continuous:
	\begin{itemize}
		\item In the zero Lyapunov exponent regime (small $\lambda$), see \cite{aj1, CCYZ, eliasson,amor}.
		\item In the positive Lyapunov exponent regime (large $\lambda$), see \cite{bourgain2000holder, goldstein2008fine,Klein}.
	\end{itemize}
	While uniform H\"older lower bounds are unavailable \cite{KXZ2020AnosovKatok,LYZ}, a weak lower bound exists (Proposition \ref{N-continuous}). This allows us to relate the $\omega_s$-Hausdorff measures of $D(\delta)$ and $F(\delta)$:
	
	\begin{proposition}\label{DtoF'}
		Let $0<\lambda<1$ and $\alpha\in DC$. For any $0 < \delta \leq \infty$ and $s>0$,
		\begin{equation}\label{D-F}
			3^{-s}\cdot\mathcal{H}^{\omega_s}(D(\delta)) \leq \mathcal{H}^{\omega_s}(F(\delta)) \leq 3^{s+1}\cdot\mathcal{H}^{\omega_s}(D(\delta)).
		\end{equation}
	\end{proposition}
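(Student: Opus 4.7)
The plan is to exploit the fact that $\mathcal{N}$ is a continuous, non-decreasing surjection from (the portion of the spectrum containing) $F(\delta)$ onto $D(\delta)$, combined with two regularity properties: a Hölder upper bound $\mathcal{N}(b)-\mathcal{N}(a)\leq C(b-a)^\gamma$ (available for $0<\lambda<1$ with Diophantine $\alpha$ from the AC spectral theory), and the weak log--type lower bound supplied by Proposition \ref{N-continuous}. The key analytic principle is the robustness of the logarithmic gauge $\omega_s(r)=(-\log r)^{-s}$ under regularity transformations: whenever $f$ is monotone and satisfies $f(r)\leq Cr^\gamma$, or even only $-\log f(r)\geq c_0(-\log r)^{1-\epsilon}$ for small $r$, one has $\omega_s(f(r))\leq C_s\,\omega_s(r)$, with the constant depending solely on $s$ and the exponents appearing in these bounds. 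Power--law distortions and even polylogarithmic distortions are therefore invisible to $\omega_s$ up to multiplicative constants, which is exactly what a clean zero--infinity dichotomy demands.

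For the right--hand inequality $\mathcal{H}^{\omega_s}(F(\delta))\leq 3^{s+1}\mathcal{H}^{\omega_s}(D(\delta))$, I would take an $\varepsilon$--cover $\{(a_j,b_j)\}$ of $D(\delta)$ and set $[A_j,B_j]=\mathcal{N}^{-1}([a_j,b_j])$, which is a closed (possibly degenerate) interval by continuity and monotonicity. Slightly enlarging to $(A_j-\eta,B_j+\eta)$ yields an open cover of $F(\delta)\subset\mathcal{N}^{-1}(D(\delta))$. The weak lower bound of Proposition \ref{N-continuous} controls $B_j-A_j$ in terms of $b_j-a_j=\mathcal{N}(B_j)-\mathcal{N}(A_j)$: inverting it one obtains $-\log(B_j-A_j)\geq g(-\log(b_j-a_j))$, where $g$ tends to $\infty$ fast enough to yield $\omega_s(B_j-A_j)\leq 3^{s+1}\omega_s(b_j-a_j)$ term--by--term, with the factor of $3$ in the exponent absorbing the $\eta$--enlargement and a small additive slack.

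For the other direction $3^{-s}\mathcal{H}^{\omega_s}(D(\delta))\leq \mathcal{H}^{\omega_s}(F(\delta))$, I would reverse the roles. Given any $\varepsilon$--cover $\{(A_i,B_i)\}$ of $F(\delta)$, the intervals $\{(\mathcal{N}(A_i),\mathcal{N}(B_i))\}$ automatically cover $\mathcal{N}(F(\delta))=D(\delta)$ by monotonicity. The Hölder upper bound then gives $-\log(\mathcal{N}(B_i)-\mathcal{N}(A_i))\geq\gamma(-\log(B_i-A_i))-\log C$, so that for $\varepsilon$ small enough one has $\omega_s(\mathcal{N}(B_i)-\mathcal{N}(A_i))\leq 3^{s}\omega_s(B_i-A_i)$ for each $i$. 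Summation and the passage $\varepsilon\to 0^+$ complete this direction.

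The main obstacle will be the first direction, since it forces quantitative use of the weak lower bound: one must verify that Proposition \ref{N-continuous} is strong enough to keep $\mathcal{N}^{-1}$ from exploding in the log--gauge sense, despite the fact that uniform Hölder lower bounds fail and the derivative of $\mathcal{N}$ may degenerate at exceptional scales (e.g.\ near open spectral gaps). Tracking the precise constants $3^{\pm s}$ and $3^{s+1}$ (rather than some larger implicit constants) will require careful bookkeeping of the $\eta$--enlargement, the comparison between $\omega_s(\cdot)$ and its power--law proxies at the relevant scales, and the handling of overlaps in intervals pulled back through $\mathcal{N}^{-1}$.
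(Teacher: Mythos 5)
Your easy direction ($3^{-s}\mathcal{H}^{\omega_s}(D(\delta)) \leq \mathcal{H}^{\omega_s}(F(\delta))$, via pushing forward a cover of $F(\delta)$ through $\mathcal{N}$ and absorbing the H\"older-upper-bound distortion into a factor $3^s$ in the log gauge) is exactly the paper's argument and is fine. The problem is the other direction, which you correctly identify as the main obstacle but then do not actually resolve: you assert that "inverting" Proposition \ref{N-continuous} gives $-\log(B_j-A_j)\geq g(-\log(b_j-a_j))$, but Proposition \ref{N-continuous} is a \emph{symmetric} increment bound $\mathcal{N}(E+\varepsilon)-\mathcal{N}(E-\varepsilon)\geq c\,\varepsilon^{3/2}$ valid only for centers $E\in\Sigma_{\lambda,\alpha}$. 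For a pullback interval $[A_j,B_j]=\mathcal{N}^{-1}([a_j,b_j])$ whose interior is dominated by a spectral gap, there is no spectrum point near the middle at which to center the bound, and centering at an endpoint $A_j\in\Sigma_{\lambda,\alpha}$ only controls $\mathcal{N}(B_j)-\mathcal{N}(2A_j-B_j)$, which can vastly overshoot $b_j-a_j$. So the inversion step, as stated, fails precisely in the degenerate situation you flag.

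The paper's fix, which is the real content of the proof, is a two-case analysis on each pullback interval $I_i=[E_i^1,E_i^2]$. If some point of the middle third of $I_i$ lies in $\Sigma_{\lambda,\alpha}$, center the lower bound there; the symmetric increment then sits inside $[\mathcal{N}(E_i^1),\mathcal{N}(E_i^2)]$ and yields $|\mathcal{N}(I_i)|\geq |I_i|^3$. Otherwise the middle third is swallowed by a gap $[E_i^-,E_i^+]$ on which $\mathcal{N}$ is constant; one then \emph{replaces} $I_i$ by the two flanking subintervals $I_i^1=[E_i^1,E_i^-]$ and $I_i^2=[E_i^+,E_i^2]$, anchored at the gap edges (which do lie in the spectrum), and uses the reflection trick $\mathcal{N}(2E_i^- - E_i^1)=\mathcal{N}(E_i^-)$ (valid because $2E_i^--E_i^1$ stays in the gap, since the gap is at least as long as $I_i^1$) to convert the symmetric increment into the needed one-sided bound $|\mathcal{N}(I_i^1)|\geq |I_i^1|^3$. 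This splitting into at most two intervals per $J_i$ is also the true source of the extra factor ($2\leq 3$, giving $3^{s+1}$ rather than $3^s$); it does not come from an $\eta$-enlargement or "additive slack" as you suggest. Without this case split and reflection argument your proof of the right-hand inequality of \eqref{D-F} is incomplete.
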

	
	\begin{remark}
		In fact, one can obtain the following result: if there exists $t > 2$ such that the gauge function $\omega$ satisfies
		\begin{equation*}
			\limsup_{r \to 0^+} \frac{\omega(r)}{\omega(r^t)} < K < \infty,
		\end{equation*}
		then $K^{-1}\cdot\mathcal{H}^{\omega}(D(\delta)) \leq \mathcal{H}^{\omega}(F(\delta)) \leq 2K\cdot\mathcal{H}^{\omega}(D(\delta)).$
	\end{remark}

	Proposition \ref{DtoF'} reduces the problem to analyzing the Diophantine set $D(\delta)$:
	
	\begin{theorem}\label{F}
		For any irrational number $\alpha\in[0,1]$ and $0<\delta\leq\infty$, we have $$
		\mathcal{H}^{\omega_s}(D(\delta))=
		\begin{cases}
			0, & \hbox{if $s>1$,} \\
			\infty, & \hbox{if $s\le 1$.}
		\end{cases}
		$$ So we have $\dim_{H, \log}D(\delta)=1$.
	\end{theorem}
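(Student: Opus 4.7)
The plan is to prove the zero-infinity dichotomy in two halves, from which $\dim_{H,\log}D(\delta)=1$ follows directly.

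\textbf{Upper bound ($s>1$).} By the definition of $\limsup$, for any $0<\delta'<\delta$ (and any $\delta'>0$ if $\delta=\infty$) one has
\[
D(\delta)\subset\bigcap_{N\geq 1}\bigcup_{|k|\geq N}B\bigl(k\alpha,\,e^{-\delta'|k|}\bigr).
\]
I would use this family as a cover; since $\omega_s(2e^{-\delta'|k|})\leq C|k|^{-s}$ and $\sum_{|k|\geq N}|k|^{-s}$ converges with vanishing tail for $s>1$, this forces $\mathcal{H}^{\omega_s}(D(\delta))=0$.

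\textbf{Reduction for the lower bound.} Because $\omega_s(r)=(-\log r)^{-s}\geq(-\log r)^{-1}=\omega_1(r)$ on $(0,1/e)$ whenever $s\leq 1$, one has $\mathcal{H}^{\omega_s}\geq\mathcal{H}^{\omega_1}$, and it suffices to prove $\mathcal{H}^{\omega_1}(D(\delta))=\infty$. The plan is to build a Cantor-type subset $K\subset D(\delta)$ carrying a probability measure $\nu$ satisfying a mass-distribution bound of the form $\nu(B(x,r))\leq\eta(r)\,\omega_1(r)$ with $\eta(r)\downarrow 0$; the mass distribution principle then upgrades this to $\mathcal{H}^{\omega_1}(K)=\infty$.

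\textbf{Cantor construction.} I would pick very rapidly increasing integer scales $k_1<k_2<\cdots$ drawn from continued-fraction denominators of $\alpha$. At level $n$, inside each parent interval of level $n-1$, select integers $k_{n,i}$ with $|k_{n,i}|\asymp k_n$ and $k_{n,i}\alpha\bmod 1$ in the parent (the three-distance theorem / equidistribution of $\{k\alpha\}$ guarantees enough such $k_{n,i}$), and place one child for each $i$, of length $\asymp e^{-\delta|k_{n,i}|}$ situated at distance comparable to $e^{-\delta|k_{n,i}|}$ from $k_{n,i}\alpha$ (bounded away from $0$, so that $\|x-k_{n,i}\alpha\|_{\mathbb{R}/\mathbb{Z}}\geq\tfrac12 e^{-\delta|k_{n,i}|}$). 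Distribute $\nu$ uniformly among children at each level. Every $x\in K$ then satisfies $-\log\|x-k_{n,i}\alpha\|_{\mathbb{R}/\mathbb{Z}}/|k_{n,i}|\to\delta$ along a subsequence, so $\delta(\alpha,x)\geq\delta$.

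\textbf{Main obstacle.} The hard part will be to guarantee $K\subset D(\delta)$, i.e., that the $\limsup$ equals $\delta$ exactly and not strictly more. Spurious approximations $\|x-k\alpha\|_{\mathbb{R}/\mathbb{Z}}\leq e^{-(\delta+\varepsilon)|k|}$ at some $k$ outside $\{k_{n,i}\}$ would push $\delta(\alpha,x)$ above $\delta$ and eject $x$ from $D(\delta)$. This is controlled by the continued-fraction arithmetic of $\alpha$: for $k$ in intermediate ranges between two successive scales, sharp lower bounds on $\|k'\alpha\|_{\mathbb{R}/\mathbb{Z}}$ combined with the triangle inequality and rapid growth of $k_n$ yield $-\log\|x-k\alpha\|_{\mathbb{R}/\mathbb{Z}}/|k|\leq\delta+o(1)$; any residual exceptional children are deleted by a Borel-Cantelli-style argument whose total $\nu$-mass is negligible. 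The mass-distribution bound is then checked by a standard ball-counting argument at the exponential scale $\ell_n\asymp e^{-\delta k_n}$, provided the branching numbers $M_n$ are chosen large enough that $\prod_{i\leq n}M_i\gtrsim k_n$ — a condition compatible with the orbit-point counts supplied by the three-distance theorem.
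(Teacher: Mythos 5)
Your upper bound ($s>1$) is exactly the paper's argument (a Borel--Cantelli/covering estimate on the limsup set $B(\eta)\supset D(\delta)$), and your overall strategy for $s\le1$ — reduce to $\mathcal{H}^{\omega_1}=\infty$, build a Cantor set of points lying in thin annuli around orbit points $k\alpha$ at continued-fraction scales, control spurious approximations by the separation $\|k\alpha-k'\alpha\|>1/(2q_n)$, and apply the mass distribution principle — is also the paper's. However, there is a genuine gap in the quantitative heart of the lower bound. With one scale $k_n$ per level, equidistribution caps the branching at $M_n\lesssim k_n e^{-\delta k_{n-1}}$ (that many orbit points $k\alpha$, $|k|\asymp k_n$, land in a parent of length $\asymp e^{-\delta k_{n-1}}$), and for a parent to contain at least two children separated by $\gtrsim 1/k_n$ one is forced to take $k_n\gtrsim e^{\delta k_{n-1}}$. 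Telescoping, $\prod_{i\le n}M_i\lesssim k_n\exp\bigl(-\delta\sum_{i<n}k_i\bigr)$, which is far smaller than $k_n$; so your stated requirement $\prod_{i\le n}M_i\gtrsim k_n$ — which is exactly what uniform mass $1/\prod M_i\le C\,\omega_1(e^{-\delta k_n})$ demands — is \emph{not} achievable, and the mass distribution principle fails at the scale of a single level-$n$ piece. This is not a technicality: it is the main difficulty of working in the gauge $\omega_1$, for which a child is exponentially smaller than its parent in diameter but only polynomially smaller in $\omega_1$-size.

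The paper's construction circumvents this by making each level out of $j_t$ consecutive continued-fraction sub-scales $q_{k_t},q_{k_t+2},\dots,q_{k_t+2(j_t-1)}$, with $j_t$ chosen \emph{exponentially large} in $n_{t-1}$ (via the parameter $a_{t-1}\asymp n_{t-1}\delta_{t-1}e^{-n_{t-1}\delta_{t-1}}$ and the condition $j_t\cdot 2^9a_{t-1}\delta_t^{-1}\in[1/2,1]$), and by distributing mass \emph{non-uniformly}: the child $A(n)$ receives weight proportional to $(n\delta_t)^{-1}=\omega_1(e^{-n\delta_t})$. The accumulated weight $\sum_i q_{k_t+2i}|A(n_{t-1})|\cdot(q_{k_t+2i}\delta_t)^{-1}\asymp j_t|A(n_{t-1})|/\delta_t$ is then large enough to carry the parent's mass, and in the ball-counting step the child masses are comparable to the lengths of disjoint balls $B(n\alpha,a_{t-1}(n\delta_t)^{-1})$ inside $B(x,9r)$, giving the clean bound by $18r$. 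You would need to incorporate both of these devices (multi-scale levels plus $1/n$-weighted mass), together with the extra multiplicity/overlap prunings across sub-scales (the paper's Propositions 3.1 and 3.3), for your sketch to close. Finally, your construction with a fixed exponent $e^{-\delta|k|}$ does not cover $\delta=\infty$; the paper handles this by running the construction with a non-decreasing sequence $\delta_t=\min\{\delta,\log\log t\}\uparrow\delta$.
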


\begin{remark}
Theorem~\ref{F} holds for every irrational frequency \(\alpha\), whereas Theorem~\ref{DtoF} is stated only for Diophantine \(\alpha\).  The distinction stems from the dependence of the reduction on Proposition~\ref{N-continuous}: this proposition provides the uniform regularity of the integrated density of states (the continuity/Hölder estimates used in the bridge argument Proposition \ref{DtoF'}) and, as shown in \cite{ALSZ2024abominable}, is available only for Diophantine frequencies. 
\end{remark}

Indeed, both \(D(\delta)\) and the classical limsup set
\[
E(\psi):=\{y\in[0,1]:\ \|y-k\alpha\|_{\R/\Z}<\psi(k)\ \text{for infinitely many }k\in\mathbb Z\}
\]
belong to the same family of Diophantine–approximation problems: they measure how well points on the circle are approximated by the orbit \(\{k\alpha\}\).  From this point of view the set \(E(\psi)\) is a shrinking–target  set for the rotation, and its metric size (Lebesgue measure, Hausdorff measure and Hausdorff dimension) has been a central object of study since the pioneering work of Kurzweil \cite{K1955}.  Over the decades many authors investigated variants and refinements of Kurzweil's problem (see, e.g., \cite{BD1999,B2003,LR2013,TS2003}).  Only in recent years, however, have complete results been obtained describing both the Lebesgue measure and the dimension–theoretic behaviour of the sets \(E(\psi)\) for general approximating functions \(\psi\); see \cite{FK2016,KRW2018} for modern treatments and precise statements.

The set \(D(\delta)\) is more delicate because a point \(x\in D(\delta)\) must satisfy two simultaneous constraints: (i) it is very well approximated by \(k\alpha\) along infinitely many indices (as in \(E(\psi)\)), and (ii) it remains quantitatively separated  for all orbits.  To achieve both requirements we must control how closely and how often the orbit points \(k\alpha\) come together.  Continued fractions are the standard tool for this: their convergents \(p_n/q_n\) describe natural scales at which the orbit nearly repeats, so by using those scales we can pick a sparse subsequence of indices that produce very close returns while keeping the other orbit points uniformly separated.

The proof of Theorem~\ref{F} borrows the constructive flavor of classical metric methods (cf.\ \cite{BGN,Bugeaud}) where their proof is based on the fact that the resonant points form a ubiquitous systems \cite{BDV}. However, for irrational rotation, $\{k\alpha: k\ge 1\}$ is not a ubiquitous system which is also the main reason why the measure and dimension of $E(\psi)$ was not solved over a long time. Our proof departs from them in two key ways: (a) we work in a logarithmic gauge rather than a power–law one, and (b) membership in \(D(\delta)\) imposes both upper and lower asymptotic constraints simultaneously, which complicates overlap and multiplicity estimates.  To handle these issues we combine (i) continued–fraction separation lemmata, (ii) a covering strategy adapted to the logarithmic gauge, and (iii) a tailored mass–distribution construction on Cantor–type sets.  This short, three-part synthesis is the arithmetic core that underpins our estimates for \(D(\delta)\) and, via the spectral reduction, yields the multifractal results in the paper.

	\section{Proof of Proposition \ref{DtoF'}}
	
	In this section, we establish the Hausdorff measure-transitivity property for the sets $D(\delta)$ and $F(\delta)$ under the gauge function $\omega_s$.
	
	% \begin{proposition}\label{DtoF'}
		% 	Given $0<\delta\leq\infty$. If there exists some $ s>2 $ such that
		% 	\begin{equation} \label{com'}
			% 		\limsup_{r\to 0^+}\frac{\Gamma_\omega_s(r)}{\Gamma_\omega_s(r^s)} <C<\infty,
			% 	\end{equation}
		% 		then 	$$ C^{-1} \mathcal{H}^{\omega_s}(D(\delta))\leq\mathcal{H}^{\omega_s}(F(\delta))\leq C \mathcal{H}^{\omega_s}(D(\delta)).$$	
		% 	% \begin{enumerate}[label=(\arabic*)]
			% 	% 	\item
			
			% 		% \item If there exists $ 0<t_2<\infty $ such that
			% 		% \begin{equation}\label{com''}
				% 		% 	\liminf_{r\to 0^+}\frac{\omega_s(r)}{\omega_s_{t_2}(r)}>0
				% 		% \end{equation}
			% 		% then we have	$ \mathcal{H}^{\omega_s_{t_2}}(D(\delta))\lesssim_{t_2}\mathcal{H}^{\omega_s}(F(\delta))$.	
			
			% 		% \item If the following limit
			% 		% \begin{equation}
				% 		% 	\liminf\limits_{r\to 0^+}\frac{-\log\omega_s(r)}{\log(-\log(r))}> t_1>0,
				% 		% \end{equation}
			% 		% then we have	$ \mathcal{H}^{\omega_s}(F(\delta))\lesssim_{t_1} \mathcal{H}^{\omega_s_{t_1}}(D(\delta))$.	
			% 		% \item If the following limit
			% 		% \begin{equation}
				% 		% 	\limsup\limits_{r\to 0^+}\frac{-\log\omega_s(r)}{\log(-\log(r))}< t_2<\infty,
				% 		% \end{equation}
			% 		% then we have	$ \mathcal{H}^{\omega_s_{t_2}}(D(\delta))\lesssim_{t_2}\mathcal{H}^{\omega_s}(F(\delta))$.	
			% 	% \end{enumerate}
		% \end{proposition}

	\begin{proof}[Proof of Proposition \ref{DtoF'}]
		
		Fix any $0<\delta\leq\infty$. Since the integrated density of states $\mathcal{N}:\Sigma_{\lambda, \alpha}\to[0,1]$ is a continuous non-decreasing surjective function, by the definitions of $D(\delta)$ and $F(\delta)$, it can be checked directly that
		$$D({\delta})=\mathcal{N}(F({\delta})).$$
		%\begin{equation}
		%\frac{1}{(-\log r)^{t_1}}\leq\omega_s(r)\leq\frac{1}{(-\log r)^{t_2}}.
		%\end{equation}
		%	Indeed,  For any $x\in D({\delta})\subseteq [0,1]$, then there exists an $E\in\Sigma_{\lambda,\alpha}$ such that
%$\mathcal{N}(E)=x$, this implies that $E\in F({\delta})$ and $x=\mathcal{N}(E)\in \mathcal{N}(F({\delta}))$,
%so $D({\delta})\subseteq \mathcal{N}(F({\delta}))$; On the other hand, for any point $E\in F({\delta})$,
%which means $\mathcal{N}(E)\in D({\delta})$, and hence $\mathcal{N}(F(\delta))\subseteq D(\delta)$.

		Recall that the integrated density of states $\mathcal{N}$ is uniformly $ 1/2 $-H\"older continuous. More precisely, we have the following proposition:
		
		\begin{proposition}[\cite{avila2008absolutely}]\label{N-continuous}
			Let $ \alpha \in \mathrm{DC} $ and $ 0 < \lambda < 1 $. Then, there exists $ 0<c = c(\lambda,\alpha)<1 $ such that for any $ E \in \Sigma_{\lambda,\alpha} $ and $ 0 < \varepsilon < 1 $, the following inequality holds:
			\[
			c\,\varepsilon^{\frac{3}{2}} \leq \mathcal{N}(E + \varepsilon) - \mathcal{N}(E - \varepsilon) \leq c^{-1}\, \varepsilon^{\frac{1}{2}}.
			\]
		\end{proposition}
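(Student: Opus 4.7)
My plan is to derive both two-sided H\"older estimates from the quantitative almost-reducibility theorem of Avila for subcritical Schr\"odinger cocycles. Concretely, associate to the AMO the cocycle $(A_{\lambda,E},\alpha)$ with
\[
A_{\lambda,E}(\theta)=\begin{pmatrix} E-2\lambda\cos(2\pi\theta) & -1\\ 1 & 0\end{pmatrix}.
\]
For $0<\lambda<1$ and $\alpha\in\mathrm{DC}$ the cocycle is subcritical and has vanishing Lyapunov exponent on $\Sigma_{\lambda,\alpha}$, so almost reducibility gives, for every $E\in\Sigma_{\lambda,\alpha}$, analytic conjugations bringing $(A_{\lambda,E},\alpha)$ arbitrarily close to a constant rotation $R_{\rho(E)}$, where $\rho$ denotes the fibered rotation number. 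Since $\mathcal{N}$ is an affine function of $\rho$, both claimed bounds on $\mathcal{N}(E+\varepsilon)-\mathcal{N}(E-\varepsilon)$ reduce to two-sided modulus-of-continuity bounds on $\rho(E)$ restricted to the spectrum.

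For the upper estimate $\mathcal{N}(E+\varepsilon)-\mathcal{N}(E-\varepsilon)\le c^{-1}\varepsilon^{1/2}$ I would run the analytic KAM scheme of Eliasson, Hadj Amor, and Avila--Jitomirskaya to obtain a uniformly controlled conjugation whose norm depends only on $\lambda$ and the Diophantine parameters $(\gamma,\tau)$ of $\alpha$; the quantitative closeness to $R_{\rho(E)}$ then yields $1/2$-H\"older continuity of $\rho$, and therefore of $\mathcal{N}$, on $\Sigma_{\lambda,\alpha}$. The $1/2$ exponent is the universal square-root behaviour of $\rho$ at the edges of spectral gaps and is the best one can extract from almost reducibility alone.

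For the lower estimate $\mathcal{N}(E+\varepsilon)-\mathcal{N}(E-\varepsilon)\ge c\,\varepsilon^{3/2}$ I would invoke Aubry duality, which maps the subcritical AMO at coupling $\lambda<1$ to a supercritical AMO at coupling $1/\lambda>1$ lying in the Anderson-localization regime. Exponential localization of the dual eigenfunctions, together with the Aubry correspondence relating eigenvalues of $H_{1/\lambda,\alpha,\theta}$ to spectral data of $H_{\lambda,\alpha,\theta}$, yields a polynomial lower bound on the density of dual eigenvalues in intervals of size $\varepsilon$. Translating this back to the original operator produces $\mathcal{N}(E+\varepsilon)-\mathcal{N}(E-\varepsilon)\ge c\,\varepsilon^{3/2}$ uniformly on the spectrum, where the exponent $3/2$ reflects the combined loss from the $1/2$-H\"older upper bound and the polynomial localization rate of the dual.

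The main obstacle is the lower bound. Almost reducibility and the KAM machinery are engineered to produce upper H\"older regularity, and when applied directly they only yield logarithmic lower bounds on $\mathcal{N}$ of the Bourgain--Klein and Craig--Simon type. Obtaining a polynomial lower bound requires the full strength of duality together with quantitative localization estimates for the dual operator; ensuring that the localization rate transfers cleanly---without an uncontrolled loss coming from the discrete set of resonances between the dual eigenvalues and the orbit $\{k\alpha\}$---is the delicate technical point, and the balance between the KAM exponent $1/2$ and a full polynomial power $\varepsilon$ is precisely what produces the exponent $3/2$ on the lower side.
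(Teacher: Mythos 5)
You should first note that the paper does not prove this proposition at all: it is imported verbatim from Avila \cite{avila2008absolutely} (and used as a black box in Proposition \ref{DtoF'}), so there is no internal argument to compare with; your sketch has to stand on its own. The upper-bound half of your plan is fine in outline: $1/2$-H\"older continuity of $\rho$, hence of $\mathcal{N}$, on $\Sigma_{\lambda,\alpha}$ does come from quantitative almost reducibility in the Eliasson--Hadj Amor--Avila--Jitomirskaya framework, with constants depending only on $\lambda$ and the Diophantine data of $\alpha$.

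The lower-bound half, however, has a genuine gap. The mechanism you propose --- Aubry duality plus Anderson localization of the dual operator giving ``a polynomial lower bound on the density of dual eigenvalues in intervals of size $\varepsilon$'' --- is circular: under duality the integrated densities of states satisfy $\mathcal{N}_{\lambda}(E)=\mathcal{N}_{1/\lambda}(E/\lambda)$, so the ``density of dual eigenvalues'' in an $\varepsilon$-window is, up to rescaling, exactly the increment $\mathcal{N}(E+\varepsilon)-\mathcal{N}(E-\varepsilon)$ you are trying to bound from below. Exponential localization of the dual eigenfunctions says nothing quantitative about how much spectral mass such a window must carry (it is perfectly compatible with a very flat IDS), and no step of your outline actually produces the exponent $3/2$; it is asserted as a ``combined loss'' rather than derived. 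The bound in \cite{avila2008absolutely} comes instead from the same quantitative almost-reducibility input as the upper bound: at scale $\varepsilon$, every $E\in\Sigma_{\lambda,\alpha}$ admits a conjugation $B$ (equivalently, transfer-matrix control up to times of order $\varepsilon^{-1}$) with $\|B\|\le C\varepsilon^{-1/4}$, and monotonicity of the fibered rotation number in $E$ after conjugation (or, alternatively, Jitomirskaya--Last power-law subordinacy bounds for $\mu_\theta$, integrated over $\theta$) converts this into $\mathcal{N}(E+\varepsilon)-\mathcal{N}(E-\varepsilon)\ge c\,\varepsilon\|B\|^{-2}\ge c\,\varepsilon^{3/2}$, while the same norm bound gives the $\varepsilon^{1/2}$ upper estimate. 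If you want to keep a duality flavour, the correct use of localization of the dual model is the quantitative almost-localization-to-almost-reducibility implication that furnishes the bound on $\|B\|$ --- not a count of dual eigenvalues.
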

		%We first observe that if $\mathcal{H}^{\omega_s}(D(\delta))=\infty$, the second inequality in \eqref{D-F} holds; similarly,
%if $\mathcal{H}^{\omega_s}(F(\delta))=\infty$, the first inequality in \eqref{D-F} is satisfied.
%The subsequent analysis therefore bifurcates into two distinct cases:
		% (1) By \eqref{com'}, there exists some $r_1$, such that for all $0<r\leq r_1$, we have
		% \begin{equation}\label{com-1}
			% 	\omega_s(r)\lesssim\omega_s_{t_1}(r).
			% \end{equation}
		% By Theorem \ref{F}(i) and \eqref{com-1}, we have
%$\mathcal{H}^{\omega_s_{t-\varepsilon}}(D(\delta))=\mathcal{H}^{\omega_s_{t+\varepsilon}}(D(\delta))=\mathcal{H}^{\omega_s}(D(\delta))=0$.
		
		Next we prove the first inequality of \eqref{D-F}. Given any countable cover $(I_i)_i$ of $F(\delta)$ consisting of closed intervals $I_i=[E_i^1,E_i^2]$ with $|I_i|:=\text{diam}\ I_i\leq c^{6}$.
		By Proposition \ref{N-continuous} and $\omega_s$ is increasing, then
		\begin{equation*}
			\omega_s(|\mathcal{N}(I_i)|)=\omega_s\left(\mathcal{N}(E_{i}^{2})-\mathcal{N}(E_{i}^{1})\right)\leq \omega_s(c^{-1}|I_i|^{\frac{1}{2}})\leq3^s\omega_s(|I_i|),
		\end{equation*}
		for the last inequality as above, we use $|I_i|\leq c^{6}$. Since $D(\delta)=\mathcal{N}(F(\delta))\subset\bigcup_i\mathcal{N}(I_i)$, then $\bigcup_i\mathcal{N}(I_i)$ forms a cover of $D(\delta)$, we have
		\begin{align*}\label{D=F-1}
			\mathcal{H}^{\omega_s}(D(\delta))\leq	\sum_{i}\omega_s\left(|\mathcal{N}(I_i)|\right)&\leq 3^s\sum_i \omega_s(|I_i|).
		\end{align*}
		This gives $3^{-s}\mathcal{H}^{\omega_s}(D(\delta))\leq \mathcal{H}^{\omega_s}(F(\delta)) $.
		% Otherwise, we have $ \mathcal{H}^{\omega_s}(D(\delta))=\mathcal{H}^{\omega_s}(F(\delta))=\infty $.
		
		At last we prove the second inequality of \eqref{D-F}. Fix countable cover $(J_i)_i$ of $D(\delta)$ consisting of closed intervals $J_i=[a_i,b_i]$ with $|J_i|=|b_i-a_i|\leq(\frac{c}{6})^2$ and $|\mathcal{N}^{-1}(J_i)|<1/2$. Noting that  $ \mathcal{N}$  is not bijective, and it is locally constant in the resolvent set $\R \backslash \Sigma_{\lambda, \alpha},$ thus we can take
		\begin{equation*}
			E_i^1=\max\{E\in\Sigma_{\lambda, \alpha}:\mathcal{N}(E)\leq a_i\}\ \ \ \ \text{and}\ \ \ \ E_i^2=\min\{E\in\Sigma_{\lambda, \alpha}:\mathcal{N}(E)\geq b_i\}.
		\end{equation*}
		It follows that $E_{i}^{1},E_{i}^{2}\in\Sigma_{\lambda,\alpha}$,
		and denote $I_i=[E_i^1,E_i^2]$, then we have
		$\mathcal{N}(I_i)=[\mathcal{N}(E_{i}^{1}),\mathcal{N}(E_{i}^{2})]=J_i$ and $|I_i|<1/2$.
		
		In the following, we construct a cover $(I_i^1\cup I_i^2)_i$ of $F(\delta)$ with $(I_i^1\cup I_i^2)\subset I_i$ such that
		\begin{equation}\label{sum-J_i'}
			|\mathcal{N}(I^1_i)|\geq|I^1_i|^{3}\ \ \ \ \text{and}\ \ \ \ |\mathcal{N}(I^2_i)|\geq|I^2_i|^{3}.
		\end{equation}	
		
		We distinguish the proof into two cases:

		\noindent\textbf{Case 1: If there exists $ t_0\in[1/3,2/3]  $ such that
			$
			E_i^*:=E_{i}^{1}+t_0(E_{i}^{2}-E_{i}^{1})\in\Sigma_{\lambda,\alpha}
			$.}
		In this case, 	let \begin{equation}\label{1}
			I^1_i=I^2_i:=[E_{i}^{1}, E_{i}^{2}].
		\end{equation}
		Without loss of generality, we only to consider $t_0\in[1/2,2/3]$. Now we see that $$0<E_i^2-E_i^*=(1-t_0)|E_i^2-E_i^1|<|I_i|<1,$$ then
		by Proposition \ref{N-continuous} and monotonicity of $\mathcal{N}$, we have
		\begin{align}\label{N-continuous-1}
			|\mathcal{N}(I^1_i)|=|\mathcal{N}(I^2_i)|
			&\geq|\mathcal{N}(E_i^*+(E_{i}^{2}-E_i^*))-\mathcal{N}(E_i^*-(E_{i}^{2}-E_i^*))|\notag\\
			&\geq c|E_i^2-E_i^*|^{\frac{3}{2}}\geq c(1-t_0)^{\frac{3}{2}}|E_i^2-E_i^1|^{\frac{3}{2}}\notag\\
			&\geq\frac{c}{6}|E_{i}^{2}-E_{i}^{1}|^{\frac{3}{2}}\geq|E_i^2-E_i^1|^{3}\\&=|I^1_i|^{3}=|I^2_i|^{3},\nonumber
		\end{align}
		where, in the last inequality of \eqref{N-continuous-1}, we use  $$|E_i^2-E_i^1|\leq(\frac{6}{c}|\mathcal{N}(I_i)|)^{2/3}=(\frac{6}{c}|J_i|)^{2/3}\leq(\frac{c}{6})^{\frac{2}{3}}.$$

		\noindent\textbf{Case 2:  If there is no $ t_0\in[1/3,2/3] $ such that
			$
			E_{i}^{1}+t_0(E_{i}^{2}-E_{i}^{1})\in\Sigma_{\lambda,\alpha}
			$.}
		This means there is a spectral gap $$ G_{k}=[E_i^-,E_i^+]\subset[E_{i}^{1},E_{i}^{2}]$$ with $E_i^-,E_i^+\in\Sigma_{\lambda,\alpha}$ such that $ \mathcal{N}(E_i^-)=\mathcal{N}(E_i^+)=\mathcal{N}(E)=k\alpha \mod\Z $ for all $E\in G_k$. In this case, let
		\begin{equation}\label{2}
			I^1_i:=[E_{i}^{1},E_i^-]\ \ \ \text{and}\ \ \ I^2_i:=[E_i^+,E_{i}^{2}].
		\end{equation}
		For the interval $I^1_i$, note that $ E_i^{-}\in \Sigma_{\lambda,\alpha} $ and $ \mathcal{N}(E_i^-+E_i^--E_{i}^{1})=\mathcal{N}(E_i^-)$ (since $2E_i^--E_{i}^{1}\in G_k$), by Proposition \ref{N-continuous}, we have
		\begin{align}\label{N-continuous-2}
			|\mathcal{N}(I^1_i)|&=|\mathcal{N}(E_i^-+E_i^--E_{i}^{1})-\mathcal{N}(E_i^--(E_i^--E_{i}^{1}))|\geq c|E_i^--E_{i}^{1}|^{\frac{3}{2}}\geq|I^1_i|^{3}.
		\end{align}
		Similarly, for the interval $I^2_i$, we have
		\begin{equation}\label{N-continuous-3}
			|\mathcal{N}(I^2_i)|=|\mathcal{N}(E_{i}^{2})-\mathcal{N}(E_i^+)|\geq |I^2_i|^{3}.
		\end{equation}
		Thus, we finish the constructions of $I^1_i$, $I^2_i$, and $(I_i^1\cup I_i^2)\subset I_i$ by \eqref{1} and \eqref{2}.
		
		As \eqref{sum-J_i'}  follows from \eqref{N-continuous-1},\eqref{N-continuous-2} and \eqref{N-continuous-3}.
		We  are  left to show that $(I_i^1\cup I_i^2)_i$ is a cover of $F(\delta)$. By the definition of $I^1_i$ and $I^2_i$, we have $$\mathcal{N}(I_i^1)\cup\mathcal{N}(I_i^2)=\mathcal{N}(I_i)=J_i,$$ and hence $$\mathcal{N}^{-1}(J_i)\subset I_i^1\cup I_i^2.$$
		Since $D(\delta)=\mathcal{N}(F(\delta))$ and $(J_i)_i$ is a cover of $D(\delta)$, then $$F(\delta)\subseteq \mathcal{N}^{-1}(D(\delta))\subseteq\bigcup_i\mathcal{N}^{-1}(J_i)\subset\bigcup_i(I^1_i\cup I^2_i)_i.$$
		
		Since $(I_i^1\cup I_i^2)_i$ is a cover of $F(\delta)$, then by \eqref{sum-J_i'}
		\begin{align*}
			\mathcal{H}^{\omega_s}(F(\delta))&\leq	 \sum_i\omega_s(|I^1_i|+|I^2_i|)\leq\sum_{i}\omega_s\left(|\mathcal{N}(I^1_i)|^{\frac{1}{3}}+|\mathcal{N}(I^2_i)|^{\frac{1}{3}}\right)\\
			&\leq2\sum_{i}\omega_s(|\mathcal{N}(I_i)|^{\frac{1}{3}})=2\sum_{i}\omega_s(|J_i|^{\frac{1}{3}})\leq 2\cdot 3^s\sum_i\omega_s(|J_i|).%\ \ \ \ (\text{use $\mathcal{N}(I_i)=J_i$ and \eqref{com'}})
		\end{align*}
		Therefore, we have $\mathcal{H}^{\omega_s}(F(\delta))\leq3^{s+1}\mathcal{H}^{\omega_s}(D(\delta))$.
	\end{proof}

	\section{Proof of Theorem \ref{F}}
	
	The crux of proving Theorem \ref{F} lies in identifying a Cantor subset
	$C(\delta)\subset D(\delta)$ that facilitates the estimation of the Hausdorff measure of $D(\delta)$. This section is organized into three main parts:
	
	First, we review the mass distribution principle and the distribution properties of irrational numbers in Section \ref{3-1}; we then establish two fundamental propositions used in the construction of the Cantor subset
	$C(\delta)$ in Section \ref{3-2}.
	
	Next, we construct the Cantor subset $C(\delta)\subset D(\delta)$ in Section \ref{3-3}; Section \ref{3-4} assigns a mass distribution to $C(\delta)$; and Section \ref{3-5} derives the Hausdorff measure of $C(\delta)$.
	
	Finally, the proof of Theorem \ref{F} is completed in Section \ref{3-6}.
	
	\subsection{Preliminaries}\label{3-1}
	
	We cite the mass distribution principle which is a classic tool to determine the Hausdorff dimension and Hausdorff measure of a set from below.
	\begin{lemma}[\cite{F}] \label{Fal}
		Let $S\subseteq \mathbb{R}$ be a Borel set and $\mu$ be a Borel measure with $\mu(S)>0$.  Let $\omega$ be a dimension function. If there exist $c>0$ and $r_0>0$ such that for any $x\in S$ and $r\le r_0$,
		\begin{equation*}
			\mu(B(x,r))\le c\cdot \omega(r),
		\end{equation*}
		where $B(x,r)$ denotes the ball with center $x$ and radius $r$, then $\mathcal{H}^{\omega}(S)\ge \mu(S)/c>0.$
	\end{lemma}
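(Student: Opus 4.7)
The plan is to run the classical mass distribution argument in the gauge--function setting used by the paper. The guiding idea is that any covering of $S$ by short open intervals can be converted, interval by interval, into a covering by balls centered at points of $S$ of comparable radius; the hypothesis $\mu(B(x,r)) \le c\,\omega(r)$ then upgrades the $\omega$-content of the cover into a lower bound on the total $\mu$-mass of $S$.

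Concretely, I would fix $\varepsilon \in (0, r_0]$ and take an arbitrary admissible cover $\{(a_i, b_i)\}_i$ of $S$ with $b_i - a_i \le \varepsilon$. Discarding those intervals that do not meet $S$ only shrinks $\sum_i \omega(b_i - a_i)$ from above, so I may assume each $(a_i, b_i)$ meets $S$ and pick a witness $x_i \in (a_i, b_i) \cap S$. Since $x_i \in (a_i, b_i)$, the inclusion $(a_i, b_i) \subseteq B(x_i, b_i - a_i)$ holds, and the hypothesis applied at $x_i \in S$ with $r = b_i - a_i \le r_0$ gives
\[
\mu\bigl((a_i, b_i)\bigr) \le \mu\bigl(B(x_i, b_i - a_i)\bigr) \le c\,\omega(b_i - a_i).
\]
Summing over $i$, invoking countable subadditivity of $\mu$ and $S \subseteq \bigcup_i (a_i, b_i)$, yields
\[
\mu(S) \le \sum_i \mu\bigl((a_i, b_i)\bigr) \le c \sum_i \omega(b_i - a_i).
\]
Taking the infimum over all such covers produces $\mu(S) \le c\,\mathcal{H}^{\omega}_{\varepsilon}(S)$, and sending $\varepsilon \to 0^+$ yields $\mathcal{H}^{\omega}(S) \ge \mu(S)/c > 0$.

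There is essentially no obstacle beyond bookkeeping. The only geometric point to verify is the inclusion $(a_i, b_i) \subseteq B(x_i, b_i - a_i)$, which is immediate in $\mathbb{R}$ because the diameter of the interval equals $b_i - a_i$; one could even shrink the radius to $(b_i - a_i)/2$ and use monotonicity of $\omega$, but this is unnecessary. No feature specific to the logarithmic gauge $\omega_s$ or to the spectral problem enters the argument, which is why the lemma is cited verbatim from Falconer's monograph.
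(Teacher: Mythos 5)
Your proof is correct and is the standard mass distribution principle argument; the paper itself gives no proof, citing the result directly from Falconer, and your argument is precisely the one found there (cover by short intervals meeting $S$, enclose each in a ball centered at a witness point of radius equal to the interval's length, apply the hypothesis, and use countable subadditivity). All the details check out, including the requirement $b_i-a_i\le\varepsilon\le r_0$ and the inclusion $(a_i,b_i)\subseteq B(x_i,b_i-a_i)$.
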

	
	The next result concerns the distribution of $\{k\alpha: k\ge 1\}$ for an irrational number $\alpha$. We identify $[0,1]$ with the unit circle.
	\begin{lemma}[\cite{Kh}]\label{l3} Let $\{q_n\}_{n\ge 1}$ be the sequence of the denominators of the convergents of $\alpha$ in its continued fraction expansion. For any $1\le k<q_n$,
		$$
		\|k\alpha\|_{\R/\Z}\ge \|q_{n-1}\alpha\|_{\R/\Z}>\frac{1}{2q_n}.
		$$ Thus for any $1\le k\ne k'\le q_n$, one has $$
		\|k\alpha-k'\alpha\|_{\R/\Z}>\frac{1}{2q_n}.
		$$
	\end{lemma}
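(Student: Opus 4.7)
The plan is to establish the two displayed inequalities in order and then derive the final consequence about $\|k\alpha-k'\alpha\|_{\R/\Z}$ as a short corollary. The first inequality $\|k\alpha\|_{\R/\Z}\ge \|q_{n-1}\alpha\|_{\R/\Z}$ for $1\le k<q_n$ is the classical best-approximation property of the second kind for continued-fraction convergents. I would prove the slightly stronger statement that $|k\alpha-p|\ge |q_{n-1}\alpha-p_{n-1}|$ for every integer $p$ and every $1\le k<q_n$. The standard argument is by contradiction: write $(p,k)=u(p_{n-1},q_{n-1})+v(p_n,q_n)$ with integers $u,v$ (possible since $p_{n-1}q_n-p_nq_{n-1}=\pm 1$), and exploit that $q_{n-1}\alpha-p_{n-1}$ and $q_n\alpha-p_n$ have opposite signs and that $1\le k<q_n$ forces $u$ and $v$ to be nonzero of opposite signs, producing a magnitude contradiction. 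Taking the infimum over $p\in\Z$ then yields the first claim.

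For the second inequality $\|q_{n-1}\alpha\|_{\R/\Z}>\tfrac{1}{2q_n}$, I would invoke the classical identity
\[
q_n\,\|q_{n-1}\alpha\|_{\R/\Z}+q_{n-1}\,\|q_n\alpha\|_{\R/\Z}=1,
\]
which is obtained by substituting the sign alternation $q_j\alpha-p_j=(-1)^j\|q_j\alpha\|_{\R/\Z}$ into the Bezout-type relation $p_{n-1}q_n-p_nq_{n-1}=(-1)^n$. Combined with the standard bound $\|q_n\alpha\|_{\R/\Z}<1/q_{n+1}$ and the denominator recurrence $q_{n+1}=a_{n+1}q_n+q_{n-1}\ge q_n+q_{n-1}>2q_{n-1}$ (valid once the strict monotonicity $q_n>q_{n-1}$ is in force, i.e.\ for $n\ge 2$), this yields $q_{n-1}\|q_n\alpha\|_{\R/\Z}<q_{n-1}/q_{n+1}<1/2$, whence $q_n\|q_{n-1}\alpha\|_{\R/\Z}>1/2$, as required. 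The few degenerate small-$n$ cases are harmless because the hypothesis $1\le k<q_n$ is vacuous when $q_n=1$.

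The second assertion of the lemma is then an immediate corollary. For any $1\le k\ne k'\le q_n$, set $\ell:=|k-k'|$, so that $1\le \ell<q_n$. By translation invariance of $\|\cdot\|_{\R/\Z}$ and the two inequalities just proved,
\[
\|k\alpha-k'\alpha\|_{\R/\Z}=\|\ell\alpha\|_{\R/\Z}\ge \|q_{n-1}\alpha\|_{\R/\Z}>\frac{1}{2q_n}.
\]

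I do not anticipate any serious obstacle, since both constituent inequalities are classical facts in the metric theory of continued fractions (see, e.g., Khintchine's monograph). The only mildly delicate points are the careful sign bookkeeping in the contradiction argument for best approximation and the verification of strict monotonicity $q_n>q_{n-1}$ used in the denominator recurrence; both are routine and can be dispatched in a line or two.
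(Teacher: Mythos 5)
Your proposal is correct: the paper gives no proof of this lemma at all (it is quoted verbatim from Khintchine's monograph, the reference [Kh]), and your argument is precisely the classical one found there — best approximation of the second kind via the decomposition $(p,k)=u(p_{n-1},q_{n-1})+v(p_n,q_n)$, the identity $q_n\|q_{n-1}\alpha\|_{\R/\Z}+q_{n-1}\|q_n\alpha\|_{\R/\Z}=1$ together with $\|q_n\alpha\|_{\R/\Z}<1/q_{n+1}$, and the reduction of the separation statement to $\ell=|k-k'|<q_n$. The only points to tidy in a written version are the trivial case $v=0$ in the best-approximation argument and the degenerate case $q_n=1$ (where, as you note, the hypothesis is vacuous), neither of which affects correctness.
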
This separation property of irrational rotation plays an essential role in studying the exact approximation.
	We also need the uniformly distributed property of $\{k\alpha: k\ge 1\}$ for irrational $\alpha$.
	\begin{lemma}[\cite{KN}]\label{KN}
		For any irrational $\alpha$, the sequence $\{k\alpha: k\ge 1\}$ is uniformly distributed modulo 1. Equivalently, by defining the discrepancy $$
		D_n=\frac{1}{n}\sup\left\{\Big|\#\{1\le k\le n: k\alpha\in (a,b)\}-(b-a)n\Big|: 0\le a<b\le 1\right\},
		$$ one has $$
		D_n\to 0, \ {\rm{as}}\ n\to\infty.
		$$
	\end{lemma}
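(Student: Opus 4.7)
The plan is to follow the classical Weyl method built on exponential sums and then upgrade the resulting pointwise equidistribution to a uniform discrepancy bound. First I would establish Weyl's criterion: a sequence $(x_k)_{k\ge 1}\subset[0,1)$ is uniformly distributed modulo~$1$ if and only if
\[
\lim_{N\to\infty}\frac{1}{N}\sum_{k=1}^{N}e^{2\pi i h x_k}=0\qquad\text{for every }h\in\Z\setminus\{0\}.
\]
The forward direction is obtained by recognizing $\int_0^1 e^{2\pi i hx}\,dx=0$ as a Riemann-type limit of the averages defining uniform distribution. For the converse, linearity extends the limit from a single exponential to every trigonometric polynomial; Stone--Weierstrass then extends it to every continuous $1$-periodic $f$; finally, sandwiching $\one_{(a,b)}$ from above and below by continuous functions yields $\frac{1}{N}\#\{1\le k\le N:x_k\in(a,b)\}\to b-a$ for each fixed interval.

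Next, I would verify Weyl's criterion for $x_k=k\alpha$ directly by summing a geometric series:
\[
\sum_{k=1}^{N}e^{2\pi i hk\alpha}=e^{2\pi i h\alpha}\cdot\frac{e^{2\pi i hN\alpha}-1}{e^{2\pi i h\alpha}-1}.
\]
Because $\alpha\notin\Q$, we have $h\alpha\notin\Z$ for every nonzero $h\in\Z$, so the denominator $e^{2\pi i h\alpha}-1$ is bounded away from zero by a constant $c_h>0$, while the numerator has modulus at most~$2$. Consequently the whole sum is $O(1/c_h)$ uniformly in $N$, and dividing by $N$ yields the required vanishing limit; this establishes uniform distribution of $\{k\alpha\}_{k\ge 1}$.

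The last and slightly more delicate step is to upgrade the pointwise convergence for each individual interval $(a,b)$ to the uniform statement $D_N\to 0$; this is where I expect the main obstacle to lie, since the supremum in the definition of $D_N$ is taken over uncountably many intervals. I would handle it by a partition argument: fix $M\in\N$, partition $[0,1]$ into $M$ half-open intervals of length $1/M$, and observe that any $(a,b)\subset[0,1]$ differs from a union of these intervals by at most two boundary pieces of total length $\le 2/M$. Applying the established pointwise limit to the finitely many partition intervals (uniformly in the choice of endpoints among the $M+1$ partition nodes), one obtains $\limsup_{N\to\infty}D_N\le 2/M$; letting $M\to\infty$ yields $D_N\to 0$. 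An alternative quantitative route would invoke the Erdős--Turán inequality, which bounds $D_N$ explicitly by a truncated sum of normalized exponential sums plus a $1/H$ error, allowing the same conclusion from the geometric-series estimate above.
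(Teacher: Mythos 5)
Your proof is correct: the Weyl-criterion argument via the geometric-series bound on $\sum_{k\le N}e^{2\pi i hk\alpha}$, followed by the partition (or Erd\H{o}s--Tur\'an) step upgrading pointwise equidistribution to $D_n\to 0$, is exactly the classical proof given in the cited reference \cite{KN}, which the paper quotes without proof. Nothing further is needed.
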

	
	By the definition of discrepancy, one sees that for any $0\le a <b\le 1$ and integers $m<n$ with $b-a>2D_{n-m}$,  \begin{align}\label{f0}
		\frac{(b-a)(n-m)}{2}\le\#\{m\le k< n: k\alpha\in (a,b)\}\le 2(b-a)(n-m).
	\end{align}
	Since $D_n\to 0$ as $n\to\infty$, for any given interval, the inequality (\ref{f0}) is applicable once $n-m$ is sufficiently large. It is clear that (\ref{f0}) can also be applied to the annulus. For any interval or an annulus, we use $|\cdot|$ to denote its Lebesgue measure.
	
	\subsection{Two basic propositions}\label{3-2}\
%	Assume that $\Delta=\{\delta_k:k\geq1\}$ is a non-decreasing sequence of positive numbers.
Fix $\delta_1>0$ and then $c>0$ throughtout this paper with $$
	c<\left(\frac{1}{24}\right)^2\cdot \left(1-e^{-\delta_1}\right).
	$$  For any $k\ge 1$ and $\delta>0$, define the annulus $$
	A(k)=B(k\alpha, e^{-k\delta})\setminus B(k\alpha, c e^{-k\delta}).
	$$ The annulus depends on the parameter $\delta$, and even the parameter may change at different stages later, however, we still omit this dependence and one will not be confused later.
	\begin{proposition}\label{p2}Let $\delta\ge \delta'\ge \delta_1$. Let $\ell\in \N$ be an integer with $q_l/2\le \ell<q_l$ for some $l\ge 1$ and $A(\ell)=B(\ell\alpha, e^{-\ell\delta'})\setminus B(\ell\alpha, c e^{-\ell\delta'})$ be the annulus. For all large integer $k$, there exist a collection of integers $$
		\widetilde{\D}_{k}[A(\ell)]\subset \{q_k/2\le n<q_k: n\alpha\in A(\ell)\},$$and a collection of annulus 
$$\widetilde{\A}_{k}[A(\ell)]:=\{A(n)=B(n\alpha, e^{-n\delta})\setminus B(n\alpha, c e^{-n\delta}): n\in \widetilde{\D}_{k}[A(\ell)]\}
		$$ with the following properties:
		\begin{itemize}
			\item  for the number of elements in $\widetilde{\D}_{k}[A(\ell)]$,
			$$
			\frac{1}{8}|A(\ell)|\cdot q_k\le \# \widetilde{\D}_{k}[A(\ell)]\le |A(\ell)|\cdot q_k;
			$$
			
			\item for any $n\in \widetilde{\D}_{k}[A(\ell)]$ and $x\in A(n)$, $$
			B(n\alpha, e^{-n\delta})\subset A(\ell);\quad	\text{and}\ \	\|x-m\alpha\|>c e^{-m\delta}$$
			for all $m$ with $q_l\le m<q_k$.
		\end{itemize}
	\end{proposition}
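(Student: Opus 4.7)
The plan is to start from the raw candidate set
\[
\D_k^0 := \{q_k/2 \le n < q_k : n\alpha \in A(\ell)\},
\]
count it using the discrepancy bound \eqref{f0}, and then prune two classes of ``bad'' indices, namely those violating (i) the containment $B(n\alpha, e^{-n\delta}) \subset A(\ell)$, and (ii) the separation $\|x - m\alpha\| > ce^{-m\delta}$ for some $m \in [q_l, q_k) \setminus \{n\}$. The surviving set will be $\widetilde{\D}_k[A(\ell)]$. For $k$ large enough that $|A(\ell)| > 2 D_{q_k/2}$, equidistribution (Lemma \ref{KN}) applied to $A(\ell)$ over the interval $[q_k/2, q_k)$ yields
\[
\tfrac14 |A(\ell)|q_k \le \#\D_k^0 \le |A(\ell)|q_k,
\]
which already delivers the upper bound stated in the proposition.

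For class (i), I would exclude the ``boundary layer'' of $\D_k^0$: those $n$ whose orbit point $n\alpha$ lies within $e^{-n\delta}$ of one of the four endpoints of $A(\ell)$. This layer has total Lebesgue measure at most $4e^{-q_k\delta/2}$, so \eqref{f0} bounds its discrete count by $o(|A(\ell)|q_k)$, hence less than $|A(\ell)|q_k/16$ for $k$ sufficiently large. For class (ii), I would exclude ``resonant'' indices $n$ for which some $m \in [q_l, q_k)\setminus\{n\}$ satisfies $\|n\alpha - m\alpha\| \le ce^{-m\delta} + e^{-n\delta}$; by the triangle inequality this is precisely what would force some $x \in A(n)$ into $B(m\alpha, ce^{-m\delta})$. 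For each fixed such $m$, writing $j = n-m$ (which ranges over an interval of length $q_k/2$), \eqref{f0} bounds the bad count by at most $2(ce^{-m\delta} + e^{-q_k\delta/2})q_k$. Summing the geometric series $\sum_{m \ge q_l} e^{-m\delta} \le e^{-q_l\delta'}/(1-e^{-\delta_1})$ and comparing with $|A(\ell)| \ge 2(1-c)e^{-q_l\delta'}$ (which uses $\ell < q_l$ together with $\delta \ge \delta' \ge \delta_1$), the calibrated hypothesis $c < (1/24)^2(1-e^{-\delta_1})$ forces the total to be at most $|A(\ell)|q_k/16$ as well.

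Combining both exclusions gives $\#\widetilde{\D}_k[A(\ell)] \ge |A(\ell)|q_k/4 - |A(\ell)|q_k/16 - |A(\ell)|q_k/16 \ge |A(\ell)|q_k/8$. The two declared properties then hold by construction: containment is built in, and for any $x \in A(n)$ and $m \in [q_l, q_k)\setminus\{n\}$ the triangle inequality gives
\[
\|x - m\alpha\| \ge \|n\alpha - m\alpha\| - \|x - n\alpha\| > (ce^{-m\delta} + e^{-n\delta}) - e^{-n\delta} = ce^{-m\delta},
\]
while the case $m = n$ is immediate from the definition of $A(n)$.

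The main obstacle is step (ii). Khintchine's separation (Lemma \ref{l3}) gives $\|(n-m)\alpha\| > 1/(2q_k)$ whenever $|n-m| < q_k$, which beats the resonance threshold $ce^{-m\delta}$ only once $m$ is not too small; for $m$ close to $q_l$ (much smaller than $q_k$) genuine near-resonances can occur and must be counted directly via discrepancy. The delicacy is to ensure that the resulting geometric tail $\sum_{m \ge q_l} e^{-m\delta}$, when compared with $|A(\ell)| \asymp e^{-\ell\delta'}$, remains a small fraction of the initial count; this is precisely what the numerical calibration of $c$ accomplishes.
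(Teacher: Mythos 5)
Your overall architecture (equidistribute over $[q_k/2,q_k)$, then prune a boundary layer and a set of resonant indices) matches the paper's, and your boundary exclusion and final arithmetic are fine. The genuine gap is in the resonance exclusion, your class (ii). There you fix $m$ and claim that \eqref{f0} bounds the number of $n\in[q_k/2,q_k)$ with $\|(n-m)\alpha\|\le ce^{-m\delta}+e^{-n\delta}$ by $2(ce^{-m\delta}+e^{-q_k\delta/2})q_k$. But \eqref{f0} is only applicable when the target arc is longer than $2D_{q_k/2}$, and here the arc has length about $2ce^{-m\delta}$, which for all but the $O(\delta^{-1}\log q_k)$ smallest values of $m$ is below $(4q_k)^{-1}$ and hence far below $D_{q_k/2}$ (one always has $D_N\gtrsim 1/N$). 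Falling back on the raw definition of discrepancy instead costs an additive error of order $D_{q_k/2}\,q_k$ for \emph{each} value of $m$; even after restricting to the $O(\log q_k)$ values of $m$ that can actually produce resonances (which Lemma \ref{l3} permits), the accumulated error is $O(D_{q_k/2}\,q_k\log q_k)$, and this need not be $o(q_k)$: Theorem \ref{F} is claimed for every irrational $\alpha$, including Liouville numbers for which $D_N\to 0$ arbitrarily slowly. So the per-$m$ discrepancy count cannot close the argument.

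The tool this step needs is not equidistribution but separation plus a packing (volume) argument, which is what the paper does: if $n$ is resonant with some $m$, then Lemma \ref{l3} gives $(2q_k)^{-1}<\|n\alpha-m\alpha\|\le e^{-n\delta}+ce^{-m\delta}$, which forces $ce^{-m\delta}\ge(4q_k)^{-1}$ and hence $B(n\alpha,(4q_k)^{-1})\subset B(m\alpha,3ce^{-m\delta})$. The balls $B(n\alpha,(4q_k)^{-1})$ over resonant $n$ are pairwise disjoint (again by Lemma \ref{l3}), so comparing total Lebesgue measure yields $\#\mathcal{F}_k\le 4q_k\sum_{q_l\le m<q_k}6ce^{-m\delta}$, and the geometric series together with $|A(\ell)|\ge(1-c)e^{-q_l\delta}$ and the calibration of $c$ finishes the count with no discrepancy error at all. (Equivalently, for each fixed $m$ you could count the $j=n-m$ with $\|j\alpha\|\le ce^{-m\delta}+e^{-n\delta}$ using the $(2q_k)^{-1}$-separation of the points $j\alpha$, $|j|\le q_k$, rather than \eqref{f0}.) A much smaller remark on class (i): that count is also not literally an application of \eqref{f0}, since the boundary layer has measure $4e^{-q_k\delta/2}\ll 2D_{q_k/2}$; it survives via the raw definition of discrepancy because the error occurs only once, but the paper sidesteps the issue more cleanly by applying equidistribution to the shrunken annulus $A'=B(\ell\alpha,(1-c)e^{-\ell\delta'})\setminus B(\ell\alpha,2ce^{-\ell\delta'})$, which makes the containment $B(n\alpha,e^{-n\delta})\subset A(\ell)$ automatic.
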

	\begin{proof}
		Applying the uniformly distributed property (see Lemma \ref{KN}) of $n\alpha$ with $q_k/2\le n<q_k$ to a smaller annulus $$
		A':=B(\ell \alpha, (1-c)e^{-\ell\delta })\setminus B(\ell \alpha, 2c e^{-\ell\delta })\subset A(\ell),
		$$ we get a collection $\D'$ of integers with the property that $$
		\D'\subset\{q_k/2\le n<q_k: n\alpha\in A'\},\ \ \#\D'\ge \frac{1-1/2}{3}|A(\ell)|\cdot q_k.
		$$
		Moreover, $B(n\alpha, e^{-n\delta})\subset A(\ell)$ for all $n\in \widetilde{\D}_{k}[A(\ell)]$ once $k$ is large, since we have already known $n\alpha\in A'$ for $n\in \D'$.
		
		To reach the last item, we need to delete those integers $n\in \D'$ such that the last item is not true.
		So we define $$
		\mathcal{F}_k=\Big\{n\in \mathcal{D}': A(n)\cap B\Big(m\alpha, c e^{-m\delta }\Big)\ne \emptyset, \ {\text{for some}}\ q_i\le m<q_k\Big\}.
		$$ The main task left is to count the cardinality of $\mathcal{F}_k$.
		
		% An analysis of the elements in $\F_k$.
		Let $n\in \mathcal{F}_k$. By the definition of $\mathcal{F}_k$, there exists $m$ with $q_i\le m<q_k$ such that $ A(n)\cap B\big(m\al, c e^{-m\delta})\big)\ne \emptyset$. Clearly, $m\ne n$ by the definition of the annulus $A(n)$. So, together with the separation condition of $\{n\al\}$, one has\begin{align*}
			(2q_k)^{-1}<\|n\al-m\al\|_{\R/\Z}\le e^{-n\delta}+ce^{-m\delta}.
		\end{align*}
		It is clear that $e^{-n\delta}<e^{-(1/2) q_k\delta}<(4q_k)^{-1}$ once $k$ is large, so $$
		(4q_k)^{-1}\le ce^{-m\delta},\ \ {\text{and}}\
		\|n\al-m\al\|_{\R/\Z}\le (4q_k)^{-1}+ce^{-m\delta}\le 2ce^{-m\delta}.
		$$ By the triangle inequality, one can check directly that
		$$\ B(n\al, (4q_k)^{-1})\subset B(m\al, 3ce^{-m\delta}).
		$$
		
		Therefore it follows that \begin{align}\label{f1}
			\bigcup_{n\in \mathcal{F}_k}B(n\al, (4q_k)^{-1})\subset \bigcup_{q_l\le m<q_k}B(m\al, 3ce^{-m\delta}).
		\end{align}
		Note that $|A(\ell)|\geq (1-c)e^{-\ell\delta'}\ge (1-c)e^{-q_l\delta}$ and the balls in the left side of (\ref{f1}) are disjoint, so a volume argument yields that \begin{align*}
			\# \mathcal{F}_k\le 4q_k\sum_{q_l\le m<q_k}6ce^{-m\delta}\le q_k e^{-q_l\delta}\cdot \frac{24c}{1-e^{-\delta}}\le q_k|A(\ell)|\cdot \frac{1-1/2}{12},
		\end{align*} where the last inequality follows from the choice of the constant $c$.
		
		By letting $\widetilde{\D}_{k}[A(\ell)]=\D'\setminus \mathcal{F}_k$, all the required properties are satisfied.\end{proof}
	
%	\begin{remark}
%		It should be emphasized that the inclusion in (\ref{f1}) is far from optimal: an underlying condition on $m$ is that $B(m\al, ce^{-m\delta_m})\cap A(\ell)\ne \emptyset$ which implies that $$
%		\|\ell \al-m\al\|\le e^{-\ell\delta_{\ell}}+ce^{-m\delta}\le 2e^{-\ell\delta_{\ell}}.
%		$$ So a detailed analysis about the distribution of $\{m\alpha: q_i\le m<q_k\}$ should be needed. However, the exponential decay of the term $e^{-m\delta_m}$ permits us being escaped from this hard question.
%	\end{remark}
	
	The following proposition is a simpler form of Proposition \ref{p2} which will be used only for the construction of the starting level of the Cantor set in Section \ref{3-3}.
	\begin{proposition}\label{p3} Let $\delta\ge \delta_1$. Let $I$ be an interval. For all large integer $k$, there exists a collection of integers $$
		\widetilde{\D}_{k}[I]\subset \{q_k/2\le n<q_k: n\alpha\in I\},$$and then a collection of annulus$$\widetilde{\A}_{k}[I]:=\{A(n)=B(n\alpha, e^{-n\delta})\setminus B(n\alpha, c e^{-n\delta}): n\in \widetilde{\D}[I]\}
		$$ with the following properties:
		\begin{enumerate}[label=(\roman*)]
			\item for the number of elements in $\widetilde{\D}_{k}[I]$, $$
			\frac{1}{8}|I|\cdot q_k\le \# \widetilde{\D}_{k}[I]\le q_k |I|;
			$$
			
			\item for any $n\in \widetilde{\D}_{k}[I]$, we have $
			B(n\alpha, e^{-n\delta})\subset I.
			$
		\end{enumerate}
	\end{proposition}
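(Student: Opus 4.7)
The plan is to follow the template of Proposition \ref{p2}, which can be substantially simplified here because Proposition \ref{p3} imposes no separation condition on the selected orbit points $n\alpha$: consequently no ``deletion'' step (and no appeal to the choice of the small constant $c$) is needed after the initial counting. Rather than shrinking an annulus from both its inner and outer radii, I would shrink the interval $I$ to $I' := \{x \in I : \mathrm{dist}(x, \partial I) \geq e^{-q_k\delta/2}\}$. This buffer width is chosen so that for every $n$ with $q_k/2 \leq n < q_k$, one has $e^{-n\delta} \leq e^{-q_k\delta/2}$, whence $n\alpha \in I'$ automatically forces $B(n\alpha, e^{-n\delta}) \subset I$. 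This is precisely property (ii).

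Having reduced the problem to counting $\{q_k/2 \leq n < q_k : n\alpha \in I'\}$, I would invoke the discrepancy bound (\ref{f0}) provided by Lemma \ref{KN}. Two elementary observations enable its application for all large $k$: first, $e^{-q_k\delta/2} \to 0$ as $k \to \infty$, so $|I'| \to |I|$, and in particular we may assume $|I'| \geq |I|/2$ eventually; second, since $D_n \to 0$, the fixed positive quantity $|I'|$ eventually exceeds $2 D_{q_k - q_k/2}$, making (\ref{f0}) applicable. The resulting two-sided estimate
$$
\frac{|I'|(q_k - q_k/2)}{2} \ \leq\ \#\{q_k/2 \leq n < q_k : n\alpha \in I'\} \ \leq\ 2|I'|(q_k - q_k/2)
$$
then yields property (i) after taking $\widetilde{\D}_k[I]$ to be this set: the lower bound delivers $\#\widetilde{\D}_k[I] \geq |I'| q_k /4 \geq |I| q_k/8$, and the upper bound delivers $\#\widetilde{\D}_k[I] \leq |I'| q_k \leq |I| q_k$.

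I do not anticipate a substantive obstacle; the argument collapses to choosing the correct buffer width and unwinding the discrepancy estimate. The only bookkeeping concern is picking the threshold ``for all large $k$'' uniformly so that both the buffer estimate $|I'| \geq |I|/2$ and the discrepancy applicability condition $|I'| > 2D_{q_k - q_k/2}$ hold simultaneously; since $D_n \to 0$ and $|I|$ is a fixed positive number, this threshold exists, and the proof terminates without further work. In short, Proposition \ref{p3} is the bare ``counting+containment'' skeleton of Proposition \ref{p2}, stripped of the separation control that in Proposition \ref{p2} required both the buffered annulus $A'$ and the deletion of the set $\mathcal{F}_k$.
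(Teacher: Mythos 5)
Your proposal is correct and follows essentially the same route as the paper, which dispenses with Proposition \ref{p3} in one line (``this follows from the uniformly distributed property of $\{n\alpha\}$''); you have simply supplied the details — the buffered interval $I'$ guaranteeing the containment $B(n\alpha,e^{-n\delta})\subset I$, and the discrepancy estimate \eqref{f0} on $[q_k/2,q_k)$ giving the two-sided count. The bookkeeping checks out: $|I'|\ge|I|/2$ for large $k$ turns the lower bound $|I'|q_k/4$ into $|I|q_k/8$, and no deletion step is needed since, as you note, the separation requirement of Proposition \ref{p2} is absent here.
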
\begin{proof}
		This follows from the uniformly distributed property of $\{n\alpha\}$.
	\end{proof}
	
	Recall a basic property on the demonimators $q_k=q_k(\alpha)$ of the convergents of $\alpha$: $$
q_{k+2}\ge q_{k+1}+q_k\ge 2q_k.
$$ Thus for any integers $n,m$ with $$
(1/2)q_i\le n<q_i, \ \ (1/2)q_j\le m<q_j, $$
if $j\ge i+2$, then $n< m$, especially $n\ne m$.
	
	\begin{proposition}\label{l3.2} Let $\delta\ge \delta'\ge \delta_1$. Let $A$ be an interval or $A=B(\ell\alpha, e^{-\ell\delta'})\setminus B(\ell\alpha, c e^{-\ell\delta'})$ an annulus with $q_l/2\le \ell <q_l$ for some $l\geq1$. Let $a$ be a small positive number with $a<\min\{2^{-10}, 2^{-10}\delta\}$ and $j\ge 1$ be an integer with \begin{align}\label{f5}
			1/2\le j\cdot 2^9a\delta^{-1}\le1.
		\end{align}
		For all large integer $k$,  there exist sub-collections $\mathcal{D}_{k+2i}[A]$ of $\widetilde{\D}_{k+2i}[A]$ for $0\le i< j$, such that \begin{enumerate}[label=(\roman*)]
			\item the collection of balls $$
			\Big\{B\Big(n\alpha, \frac{a}{\delta n}\Big): n \in\bigcup_{0\le i< j}\mathcal{D}_{k+2i}[A]\Big\}\ {\text{are disjoint}},
			$$
			\item for any $0\leq i<j$,
			$$
			\#\mathcal{D}_{k+2i}[A]\ge \frac{1}{2}\# \widetilde{\D}_{k+2i}[A]\ge \frac{1}{16}\cdot |A|\cdot q_{k+2i}.$$
		\end{enumerate}
	\end{proposition}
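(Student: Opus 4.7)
The plan is to construct the sub-collections $\mathcal{D}_{k+2i}[A]$ by a single greedy pruning pass: initialize $\mathcal{D}_{k+2i}[A]:=\widetilde{\D}_{k+2i}[A]$ for every $i\in\{0,\dots,j-1\}$, and for each $i'\ge 1$ delete from level $i'$ every $m\in\widetilde{\D}_{k+2i'}[A]$ whose ball $B(m\alpha,a/(\delta m))$ meets $B(n\alpha,a/(\delta n))$ for some $n\in\widetilde{\D}_{k+2i}[A]$ with $i<i'$. Property (i) will then hold across levels by construction, and the heart of the proof is a budgetary estimate delivering (ii).

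First I would dispose of same-level disjointness for free. For distinct $n,n'\in\widetilde{\D}_{k+2i}[A]\subset [q_{k+2i}/2,q_{k+2i})$, Lemma~\ref{l3} gives $\|n\alpha-n'\alpha\|>1/(2q_{k+2i})$, while the sum of the two radii is at most $4a/(\delta q_{k+2i})$; the hypothesis $a<2^{-10}\delta$ makes the former strictly larger. The non-trivial content is the cross-level analysis. Fix $0\le i<i'\le j-1$ and consider $n\in\widetilde{\D}_{k+2i}[A]$, $m\in\widetilde{\D}_{k+2i'}[A]$. Since both integers are smaller than $q_{k+2i'}$, Lemma~\ref{l3} still forces $\|n\alpha-m\alpha\|>1/(2q_{k+2i'})$, while a conflict requires $\|n\alpha-m\alpha\|<a/(\delta n)+a/(\delta m)\le 2a/(\delta n)$. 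Comparing these inequalities shows that a conflict can occur only when $n<4aq_{k+2i'}/\delta$; in this range the interval $I:=B(n\alpha,2a/(\delta n))$ has $|I|q_{k+2i'}\ge 1$, which is exactly what is needed to absorb the ``$+1$'' term in the separation count.

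Next I would bound, for each such conflicting $n$, the number of $m\in\widetilde{\D}_{k+2i'}[A]$ with $m\alpha\in I$. The separation bound of Lemma~\ref{l3} applied on $[1,q_{k+2i'})$ gives at most $2|I|q_{k+2i'}+1\le 3|I|q_{k+2i'}=12aq_{k+2i'}/(\delta n)$ such $m$. Summing over $n$, using $n\ge q_{k+2i}/2$ together with $\#\widetilde{\D}_{k+2i}[A]\le |A|q_{k+2i}$ to get $\sum 1/n\le 2|A|$, yields at most $24a|A|q_{k+2i'}/\delta$ conflict pairs at the pair $(i,i')$. Summing over $i\in\{0,\dots,i'-1\}$ and invoking the calibration $2^9ja/\delta\le 1$ from \eqref{f5}, the total number of $m$'s deleted from level $i'$ is at most $24ja|A|q_{k+2i'}/\delta\le (3/64)|A|q_{k+2i'}$. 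Combined with $\#\widetilde{\D}_{k+2i'}[A]\ge (1/8)|A|q_{k+2i'}=(8/64)|A|q_{k+2i'}$ from Proposition~\ref{p2}/\ref{p3}, this gives $(1/2)\#\widetilde{\D}_{k+2i'}[A]\ge (4/64)|A|q_{k+2i'}>(3/64)|A|q_{k+2i'}$, hence $\#\mathcal{D}_{k+2i'}[A]\ge (1/2)\#\widetilde{\D}_{k+2i'}[A]\ge (1/16)|A|q_{k+2i'}$, which is (ii).

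The main obstacle is the ``$+1$'' error per interval in the separation-based counting: summed naïvely over $n$ it contributes up to $\#\widetilde{\D}_{k+2i}[A]\le |A|q_{k+2i}$ per pair $(i,i')$, whose telescoping over $i$ is comparable to $|A|q_{k+2i'}$ and would swamp the budget. The rescue is the observation that Lemma~\ref{l3}, combined with the tight smallness $a<2^{-10}\delta$, restricts the conflicting $n$ to the precise regime $n<4aq_{k+2i'}/\delta$ in which $|I|q_{k+2i'}\ge 1$, converting the ``$+1$'' into a constant multiple of the main term. The hypothesis $2^9ja/\delta\le 1$ in \eqref{f5} is then exactly calibrated so that the linear-in-$j$ accumulation of cross-level conflicts stays below the $(1/16)|A|q_{k+2i'}$ threshold available from the lower bound on $\#\widetilde{\D}_{k+2i'}[A]$.
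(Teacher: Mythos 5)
Your proposal is correct and follows essentially the same route as the paper: a greedy level-by-level deletion of conflicting indices, with the deleted set bounded via the separation property of Lemma~\ref{l3} and the calibration $j\cdot 2^9 a\delta^{-1}\le 1$ from \eqref{f5}, arriving at the same bound $24 j a\delta^{-1}|A|q_{k+2i'}\le \tfrac{3}{64}|A|q_{k+2i'}<\tfrac12\#\widetilde{\D}_{k+2i'}[A]$. The only cosmetic difference is that you count conflicts interval-by-interval (absorbing the ``$+1$'' by noting conflicts force $|I|q_{k+2i'}\ge 1$), whereas the paper runs a single global packing/volume argument with the disjoint balls $B(m\alpha,(4q_{k+2i})^{-1})$; the two bookkeeping schemes yield identical estimates.
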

	\begin{proof}
Let $k$ be a large integer and $j$ be defined as above. Recall that the collections $\widetilde{\D}_{k+2i}[A]$ for $0\le i<j$ are determined by applying Proposition \ref{p2} when $A$ is an annulus or Proposition \ref{p3} when $A$ is an interval.

		Note that for each $0\le i<j$, the collection of balls $$\Big\{B(n\alpha, a /(\delta n)): n\in \widetilde{\mathcal{D}}_{k+2i}[A]\Big\}$$ are disjoint. More precisely, for any different integers $n,n'\in \widetilde{\mathcal{D}}_{k+2i}[A]$, one has $$
		q_{k+2i}/2\le n, n'<q_{k+2i}, \ {\text{and so}} \ \|n\alpha-n'\alpha\|_{\R/\Z}>(2q_{k+2i})^{-1}\geq\frac{4a}{\delta q_{k+2i}}.
		$$
		Thus we let $\D_{k}[A]=\widetilde{\D}_{k}[A]$ and call $\{n\alpha\}$ with $n\in \D_{k}[A]$ the surviving resonant points.
		
		We define $\D_{k+2i}[A]$ by induction on $i$ and the strategy is:
		to fulfill the first requirement in Proposition \ref{l3.2}, we discard those elements $m$ from $\widetilde{\D}_{k+2i}[A]$ for which the corresponding balls $B(m\al, a/(\delta m))$ will intersect the corresponding balls centered at surviving resonant points determined in the previous sub-collections $\D_{k+2t}[A]$ for $0\le t<i$.
		
		%By the remark given above, the balls $B(2^{-1}, af^{1/\alpha}(\psi(R(2^{-1}))))$ centered at resonant points $2^{-1}\in\widetilde{\A}(k)$
		%satisfies the requirements in Proposition \ref{l3.2}. So we let
		
		%We continue by induction on $i$.
		Assume that $\D_{k}[A], \D_{k+2}[A],\cdots, \D_{k+2(i-1)}[A]$ have been well defined.
		Let
		\begin{align*}
			\E_{k+2i}=\bigg\{m\in \widetilde{\D}_{k+2i}[A]: \ &B\Big(m\al, \frac{a}{\delta m}\Big)\cap B\Big(n\al, \frac{a}{\delta n}\Big)\ne \emptyset,\\
			& \ {\text{for some}}\ n\in\bigcup_{0\leq t<i}\D_{k+2t}[A]\, \bigg\}.
		\end{align*} Let $m\in\E_{k+2i}$. Then there exists $n\in \D_{k+2t}[A]$ for some $0\le t<i$ such that
		$$\|m\al-n\al\|_{\R/\Z}\le a \delta^{-1}m^{-1}+a\delta^{-1}n^{-1}.$$
		Since $n\ne m$, $n< m$ and $m, n<q_{k+2i}$, together with the separation condition of $\{n\al\}$, it follows that \begin{align*}
			(2q_{k+2i})^{-1}\le \|m\al-n \al\|_{\R/\Z}\le 2a\delta^{-1}n^{-1}\le 4a\delta^{-1}q_{k+2t}^{-1}.
		\end{align*} %Since $af^{1/\alpha}(R(2^{-1}))R(2^{-1})\le c_1/2$, it follows that $$
		%(c_1-c_1/2)R(2^{-1})^{-1}\le 2a f^{1/\alpha}(\psi(R(\xi))),\ {\text{and thus}}\
		%a f^{1/\alpha}(\psi(R(2^{-1})))\le a f^{1/\alpha}(\psi(R(\xi))).
		%$$
		Consequently, for any $y\in B(m\al, (4q_{k+2i})^{-1})$, one has \begin{align*}
			\|y-n\al\|_{\R/\Z}&\le\|y-m\al\|_{\R/\Z}+\|m\al-n \al\|_{\R/\Z}\\
			&\le (4q_{k+2i})^{-1}+4a\delta^{-1}q_{k+2t}^{-1}\le 6a\delta^{-1}q_{k+2t}^{-1}.
		\end{align*} This shows that $$B(m\al, (4q_{k+2i})^{-1})\subset B(n\al, 6a\delta^{-1}q_{k+2t}^{-1}).
		$$ In other words, \begin{align*}
			\bigcup_{m\in\E_{k+2i}}B(m, (4q_{k+2i})^{-1})\subset \bigcup_{t=0}^{i-1}\bigcup_{n\in \D_{k+2t}[A]}B(n \al, 6a\delta^{-1}q_{k+2t}^{-1}).
		\end{align*} Since the left union is disjoint, a volume argument shows that \begin{align*}
			(2q_{k+2i})^{-1} \cdot\# \E_{k+2i} \le \sum_{t=0}^{i-1}q_{k+2t}\cdot |A|\cdot 12a \delta^{-1}q_{k+2t}^{-1}=12ia\delta^{-1}|A|.
		\end{align*} Then one has, \begin{align*}
			\# \E_{k+2i} &\le 24ia\delta^{-1} |A|\cdot q_{k+2i}\le \frac{1}{16}q_{k+2i}|A|\le \frac{1}{2}\#\widetilde{\D}_{k+2i}[A].
		\end{align*} where the second inequality follows from (\ref{f5}) on the choice of $j$. Now by letting $\D_{k+2i}[A]=\widetilde{\D}_{k+2i}[A]\setminus \E_{k+2i}$, we get the desired collection of integers.\end{proof}
	
	%\begin{remark}\label{r4}\begin{itemize}
	%  \item Note that
	%Proposition \ref{l3.2} is applicable for any real number $a$ fulfilling (\ref{f5}). This freedom enables us choose it any real number as we want.
	%%In the later application of Proposition \ref{l3.2}, we will ask $a$ satisfies (\ref{f5}) at first and its exact form will be given later.
	%
	%Especially, since $f(r)/r^{\alpha}\to \infty$ as $r\to 0$, there exists $k_o\in \N$ such that for all $k\ge k_o$ and for all $\xi\in Q$ with $R(\xi)\ge M^{k}$, the quantity
	%\begin{align}\label{ff6}
	%a=\left(\frac{\mu(A(\xi))}{c_7f(\psi(R(\xi)))}\right)^{1/\alpha}\end{align} satisfies (\ref{f5}) and (\ref{ff5}) below.
%  This quantity  will be chosen as the real number $a$ when applying Proposition \ref{l3.2} to $A(\xi)$ in the Cantor subset construction.
%
%  \item
%  When applying Proposition \ref{l3.2} to $A(\xi)$ with the real number $a$ given in (\ref{ff6}), we ask the integer $k$ large enough to ensure that for any $2^{-1}\in \bigcup_{i=0}^j\A(k+j)$, $$
%  2\psi(R(2^{-1}))\le a f^{1/\alpha}(\psi(R(2^{-1}))).
%  $$ which is possible since $f(r)/r^{\alpha}\to \infty$ as $r\to 0$. Thus for any $2^{-1},2^{-1}'\in\bigcup_{i=0}^j\A(k+j)$, $$
%  B(2^{-1}, \psi(R(2^{-1})))\cap B(2^{-1}', \psi(R(2^{-1}')))=\emptyset,
%  $$ since their enlarged balls are already disjoint by the first item in Proposition \ref{l3.2}.\end{itemize}
%\end{remark}

\subsection{The Cantor subset $C(\delta)$}\label{3-3} Let $\Delta=\{\delta_k:k\geq1\}$ be a non-decreasing sequence of positive
numbers starting from $\delta_1$ which has already been fixed before Proposition \ref{p2}.
%To unify the argument for $\delta<\infty$ and $\delta=\infty$, we
%fix a non-decreasing sequence of positive numbers $\Delta=\{\delta_k:k\geq1\}$: say$$
%\delta_k=\min\{\delta, \log\log k\}, \ {\text{when $k\ge e^3$ and }} \ \delta_k=\min\{\delta, 1\}, \ {\text{for other $k$}}.
%$$
Equipped with Proposition \ref{p2} and Proposition \ref{l3.2},
we begin the Cantor subset construction.
%We show Theorem \ref{main0} for $B=X$ only,
%since the argument is absolutely the same for a general ball by verbal changes.

%We give a remark at first. Let $A(\xi)$ be an annulus. Let $\A(k),\cdots, \A(k+j)$ be the collection of resonant points constructed
%by Proposition \ref{l3.1} and Proposition \ref{l3.2}. Consider the collection of annulus: $$
%\bigcup_{i=0}^j\bigcup_{2^{-1}\in\A(k+i)}A(2^{-1}).
%$$ Then for each point $x\in \bigcup_{i=0}^j\bigcup_{2^{-1}\in\A(k+i)}A(2^{-1})$, \begin{itemize}\item besides $\xi$, there exists at least one solution to $$
%|x-2^{-1}|<\psi(2^{-1}), \ {\text{for}}\ R(\xi)\le R(2^{-1})\le M^{k+j}.
%$$ This follows from the definition of the annulus. \item For all $2^{-1}\in Q$ with $R(\xi)\le R(2^{-1})\le M^{k+j}$, $$
%|x-2^{-1}|>c\psi(R(2^{-1})).
%$$ This is the forth item in Proposition \ref{l3.1}.\end{itemize}

{\bf The first level of the Cantor subset}\

Fix a small positive number $a_0<\min\{2^{-10}, 2^{-10}\delta_1\}$ arbitrarily. Let $h_1$ be an integer such that for all $n\ge (1/2)q_{h_1}$,
\begin{align*}
%	4n\delta_{1}e^{-n\delta_1}<a_0, \ \ \
\frac{2^{-14}\cdot e^{-n\delta_{1}}}{(n\delta_{{1}})^{-1}}<\min\{2^{-16}a_0,\  2^{-10}\delta_2\}.\end{align*}
Applying Proposition \ref{p3} and then Proposition \ref{l3.2} to the interval $I=[0,1]$ with $a=a_0$, $\delta=\delta_1$, then there exist large integers $k_1\ge h_1,\ j_1\in\N$ such that \begin{align}\label{ff4}
%4q_{k_1}\delta_{1}e^{-(1/2)q_{k_1}}<a_0,\quad
1/2\leq j_1\cdot 2^9a_0\delta_{1}^{-1}\leq 1,
\end{align}
and collections of resonant points $\D_{k_1+2i}[I]$ for $0\le i< j_1$ satisfying the requirements in Proposition \ref{l3.2}.
% for all $2^{-1}\in \bigcup_{0\le i\le j_1}\A(k_1+i)$ as stated in Remark \ref{r4}.
%; at the same time,  This is possible since $r^{\alpha}/f(r)\to 0$ as $r\to 0$.In this case,
%we have $A_{k_1+i}(n)\subset B(n\alpha, a/(\delta_{k_1}q_{k_1+i}))$ for all $n \in \mathcal{D}[A_{k_1+i}(\ell)],\ 0\le i< j_1$.

At this stage, for each $n\in \D_{k_1+2i}[I]$, the annulus is defined as $$A(n)=B(n\al, e^{-n\delta_{1}})\setminus B(n\al, ce^{-n\delta_{1}}).$$
The first level is defined as $$
F_1=\bigcup_{n\in\F_1}A(n),\ \ {\text{where}}\ \F_1=\bigcup_{i=0}^{j_1-1}\D_{k_1+2i}[I].
$$
It is clear that $F_1$ consists of a collection of disjoint annulus by Proposition \ref{l3.2}, since $$
e^{-n\delta_{1}}\le \frac{a}{\delta_{1} n}, \quad{\text{and then}}\  A(n)\subset B(n\al, \frac{a}{\delta_{1} n}), \ {\text{for all}}\ n\in \F_1.
$$
%The following quantities or relations will be used later.

%
%
%\begin{itemize}
%\item For each $0\le i< j_1$, by Proposition \ref{l3.2} (ii), \begin{align}\label{f6}
%\frac{1}{16}q_{k_1+i}\le \# \D_{k_1+i}\le  q_{k_1+i}.
%\end{align}

%\item For any $n\ne n'\in \F_1$, by Proposition \ref{l3.2} (i), \begin{align*}
	%B\left(n\al, \frac{a_0}{n\delta_{k_1}}\right)\cap B\left(n'\al, \frac{a_0}{n'\delta_{k_1}}\right)=\emptyset.
	% \  B\Big(2^{-1}, \psi(2^{-1})\Big)\cap B\Big(2^{-1}', \psi(2^{-1}')\Big)=\emptyset.
	%\end{align*}
	
	% \item for each $x\in F_1$, there exists $2^{-1}\in \F_1$ such that \begin{align}\label{f7}
		%  |x-2^{-1}|<\psi(R(2^{-1})).
		%  \end{align}
	
	%  \item Recall (\ref{f4}) on the choice of $j_1$. By merging all the absolute constants in (\ref{f4}) together, it reads that
	%   %and (\ref{ff4}) for the extra requirement on $k_1$. For the real number $a_0$, we have
	%   \begin{align*}
		%  c_6\le a_0^\alpha\sum_{0\le i\le j_1}M^{(k_1+i)\alpha}f(\psi(M^{k_1+i}))\le 2c_6.
		% \end{align*} By (\ref{ff4}), since the first term  $(i=0)$ in the above series is smaller than $c_6/2$, it follows
	%  \begin{align}\label{f8}
		%   a_0^\alpha\sum_{1\le i\le j_1}M^{(k_1+i)\alpha}f(\psi(M^{k_1+i}))\ge c_6/2.
		% \end{align}
	%\end{itemize}
	
	\medskip

		{\bf The general level of the Cantor subset}\
		
		We adopt the convention $F_0=I=[0,1]$. Let $t\geq2$. Assume that the Cantor subset has been defined up to level $t-1$ where each level consists of a collection of disjoint annulus.
		Now we define the $t$-th level which contains a collection of sub-levels.
		%in the same way as the construction of the second level.
		
		Fix an element $n_{t-1}\in \F_{t-1}$ and then the corresponding annulus $$A(n_{t-1})=B(n_{t-1}\al, e^{-n_{t-1}\delta_{t-1}})\setminus B(n_{t-1}\al, ce^{-n_{t-1}\delta_{t-1}}).$$ By the inductive process, we know $n_{t-1}\in \mathcal{D}_{{k_{t-1}+2i_{t-1}}}[A(n_{t-2})]$ for some $A(n_{t-2})\in \F_{t-2}$ and integers $k_{t-1}$, $i_{t-1}$ and $j_{t-1}$ with $0\le i_{t-1}< j_{t-1}$ and
		\begin{align}\label{ff5}a_{t-1}:=\frac{2^{-14}\cdot e^{-n_{t-1}\delta_{{t-1}}}}{(n_{t-1}\delta_{{t-1}})^{-1}}<\min\{2^{-10}, 2^{-10}\delta_t\}.
		\end{align}
Let $h_t$ be an integer such that for all $n\ge (1/2)q_{h_t}$,
\begin{align}\label{fff1}
%	4n\delta_{t}e^{-n\delta_t}<a_{t-1}, \ \ \
\frac{2^{-14}\cdot e^{-n\delta_{t}}}{(n\delta_{{t}})^{-1}}<\min\{2^{-16}a_{t-1},\  2^{-10}\delta_{t+1}\}.\end{align}
Then choose a large integer $k_t\ge h_t$ and let  $j_t\in\N$ be an integer with
		\begin{align}\label{ff41}
			1/2\le j_t\cdot2^{9} a_{t-1}\delta_{t}^{-1}\le 1.
		\end{align}
		
		Applying Proposition \ref{l3.2} to the annulus $A(n_{t-1})$ with $a=a_{t-1}$ and $\delta=\delta_t$, there exist the collections  $\D_{k_t+2i}[A(n_{t-1})]$ for
$0\le i< j_t$ satisfying the requirements in Proposition \ref{l3.2}. Note that $j_t$ depends on $n_{t-1}$, 
however we donot emphasis this dependence so omit it in notation.
		%Again we require $k_n$ so large that $k_n>k_{n-1}+j_{n-1}$ and
		%\begin{align}\label{ff41}
		%a_{n-1}^{\alpha}M^{k_n\alpha}f(\psi(M^{k_n}))\le \frac{c_6}{2}, \ \ 2\psi(R(2^{-1}))\le a_{n-1} f^{1/\alpha}(\psi(R(2^{-1}))),
		%\end{align} for all $2^{-1}\in \bigcup_{0\le i\le j_n}\A(k_n+i).$
		
At this stage, for each $n\in \D_{k_t+2i}[I]$, the corresponding annulus is defined as $$A(n)=B(n\al, e^{-n\delta_{t}})\setminus B(n\al, ce^{-n\delta_{t}}).$$ Then a local $t$-th level is defined as $$
		F_t(n_{t-1})=\bigcup_{n\in\F_t(n_{t-1})}A(n),\ \ {\text{by letting}}\ \F_t(n_{t-1})=\bigcup_{i=0}^{j_t-1}\D_{k_t+2i}[A(n_{t-1})],
		$$ and the $t$-th level is defined as $$
		F_t=\bigcup_{n_{t-1}\in \F_{t-1}}\bigcup_{n\in \F_t(n_{t-1})}A(n),\ {\text{and letting}}\ \F_t=\bigcup_{n_{t-1}\in \F_{t-1}}\F_t(n_{t-1}).
		$$
		
		For convenience, for all $t\in\N$, we list the properties shared by the quantities appearing in the $t$-th local level: let $n_{t-1}\in \F_{t-1}$.\begin{itemize}
			\item For each $0\le i< j_t$, by Proposition \ref{l3.2} (ii), \begin{align}\label{f7}
				\frac{1}{16}q_{k_t+2i}\cdot |A(n_{t-1})|\le \# \D_{k_t+2i}[A(n_{t-1})]\le   q_{k_t+2i}\cdot |A(n_{t-1})|,
			\end{align}
			
			\item For any $n \ne n '\in \F_t(n_{t-1})$, by Proposition \ref{l3.2} (i), \begin{align}\label{f8}
				B\left(n\al, \frac{a_{t-1}}{n\delta_{t}}\right)\cap B\left(n'\al, \frac{a_{t-1}}{n'\delta_{t}}\right)=\emptyset.
				%\ \ \ B\Big(2^{-1}, \psi(R(2^{-1}))\Big)\cap B\Big(2^{-1}', \psi(R(2^{-1}'))\Big)=\emptyset.
			\end{align}
			
			%\item For the real number $a_{t-1}=a_{t-1}(n_{t-1})$,
			%
			%  \item for each $x\in F_n$, there are at least $n$ solutions to \begin{align}\label{ff7}
				%  |x-2^{-1}|<\psi(R(2^{-1})).
				%  \end{align}and  for any $2^{-1}\in Q$ with $M^{k_1+j_1}\le R(2^{-1})\le M^{k_n+j_n}$,\begin{align}\label{ff9}
				%  |x-2^{-1}|>c\psi(R(2^{-1})).
				%  \end{align}
		\end{itemize}
		
		Similarly, all the annulus $A(n)$ in $\F_t(n_{t-1})$ are disjoint and contained in $A(n_{t-1})$. In view of the disjointness of the annulus
		in $F_{t-1}$, the $t$-th level $F_t$ consists of a collection of disjoint annulus and there is a nested structure between $F_{t-1}$ and $F_t$.
		
%		At last we define two Cantor sets as
%		\begin{equation*}
%			\left\{\begin{aligned}
%				C:&=\bigcap_{t=1}^{\infty}F_t=\bigcap_{t=1}^{\infty}\bigcup_{n\in \F_t}A(n)\quad \text{if}\ \Delta=\{\delta\}\ \text{with}\ 0<\delta<\infty,\\
%				C(\infty):&=\bigcap_{t=1}^{\infty}F_t=\bigcap_{t=1}^{\infty}\bigcup_{n\in \F_t}A(n)\quad \text{if}\ \Delta=\{\log q_k:k\geq1\}.
%			\end{aligned}
%			\right.
%		\end{equation*}

The desired Cantor set is defined as $$
C:=\bigcap_{t=1}^{\infty}F_t=\bigcap_{t=1}^{\infty}\bigcup_{n\in \F_t}A(n).
$$ By specifying the non-decreasing sequence $\Delta=\{\delta_k:k\geq1\}$ as $$
\delta_k=\min\{\delta, \log\log k\} \ {\text{when $k\ge e^3$,\quad and }}  \delta_k=\min\{\delta, 1\} \ {\text{for other $k$}},
$$ we have the following lemma.
		\begin{lemma}\label{sub-Cantor}
			For any $0<\delta\leq\infty$, we have $$C\subset D(\delta).$$
		\end{lemma}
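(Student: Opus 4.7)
The plan is to show, for every $x\in C$, that $\delta(\alpha,x)=\delta$, i.e.\ both $\limsup_{|k|\to\infty}-\log\|x-k\alpha\|_{\R/\Z}/|k|\geq\delta$ and the matching upper bound. The two pieces of information extracted from the construction are (i) the nested annulus memberships $x\in A(n_t)$ for a sequence of indices $n_t\to\infty$ built level-by-level, and (ii) the separation clause of Proposition \ref{p2} applied inductively at each level.

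For the lower bound, I would unpack $x\in A(n_t)=B(n_t\alpha,e^{-n_t\delta_t})\setminus B(n_t\alpha,ce^{-n_t\delta_t})$ directly. This yields
\[
\delta_t\;\leq\;-\frac{\log\|x-n_t\alpha\|_{\R/\Z}}{n_t}\;\leq\;\delta_t+\frac{-\log c}{n_t}.
\]
Because $n_t\geq q_{k_t}/2\to\infty$ and the chosen sequence $\delta_t=\min\{\delta,\log\log t\}$ converges to $\delta$ (stabilising at $\delta$ when $\delta<\infty$, diverging to $\infty$ when $\delta=\infty$), both sides tend to $\delta$, so $-\log\|x-n_t\alpha\|/n_t\to\delta$, which produces the desired lower bound along the subsequence $k=n_t$.

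For the upper bound I would apply Proposition \ref{p2} at every level. By the inductive choice, $n_t\in\D_{k_t+2i_t}[A(n_{t-1})]$ with $n_{t-1}$ satisfying $q_{k_{t-1}+2i_{t-1}}/2\leq n_{t-1}<q_{k_{t-1}+2i_{t-1}}$; identifying the proposition's parameter $\delta$ with our $\delta_t$ yields $\|x-m\alpha\|_{\R/\Z}>ce^{-m\delta_t}$ for every integer $m$ with $q_{k_{t-1}+2i_{t-1}}\leq m<q_{k_t+2i_t}$. The half-open windows $\{[q_{k_{t-1}+2i_{t-1}},q_{k_t+2i_t})\}_{t\geq 2}$ tile $[q_{k_1+2i_1},\infty)$, so every sufficiently large positive integer $m$ sits in a unique window at some level $t(m)$, producing
\[
-\frac{\log\|x-m\alpha\|_{\R/\Z}}{m}\;<\;\delta_{t(m)}+\frac{-\log c}{m}\;\leq\;\delta+\frac{-\log c}{m},
\]
and letting $m\to\infty$ delivers the matching upper bound along positive $k$. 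To upgrade to the two-sided $|k|\to\infty$ in the definition of $D(\delta)$, I would handle $k=-m<0$ via the triangle inequality $\|x-k\alpha\|_{\R/\Z}\geq\|(n_t+m)\alpha\|_{\R/\Z}-e^{-n_t\delta_t}$ combined with Lemma \ref{l3} applied to the positive index $n_t+m$, selecting the level $t$ so that $n_t$ is moderate relative to $m$; this yields the symmetric bound on the negative side and completes $x\in D(\delta)$.

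The main obstacle I expect is the bookkeeping in the upper bound: verifying that the windows $[q_{k_{t-1}+2i_{t-1}},q_{k_t+2i_t})$ genuinely partition $[q_{k_1+2i_1},\infty)$ so each large $m$ is captured by exactly one level, and checking at each level that Proposition \ref{p2} is applicable (which rests on the inductive requirement $k_t\geq h_t$ having been made large enough). The negative-$k$ case brings a secondary arithmetic difficulty, namely choosing $t=t(|k|)$ so that the continued-fraction lower bound on $\|(n_t+m)\alpha\|$ dominates the $e^{-n_t\delta_t}$ error term; once that calibration is in place the rest of the proof is a direct combination of the two convergences.
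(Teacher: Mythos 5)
Your treatment of positive $k$ is exactly the paper's argument: the lower bound on the $\limsup$ comes from $x\in A(n_t)\subset B(n_t\alpha,e^{-n_t\delta_t})$ together with $\delta_t\to\delta$, and the upper bound comes from tiling $[q_{k_1+2i_1},\infty)$ by the windows $[q_{k_{t-1}+2i_{t-1}},q_{k_t+2i_t})$ and invoking the separation clause of Proposition \ref{p2} (inherited by the sub-collections $\D_{k_t+2i_t}[A(n_{t-1})]\subset\widetilde{\D}_{k_t+2i_t}[A(n_{t-1})]$) in each window. That part is correct and is essentially verbatim the paper's proof.

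The divergence is your negative-$k$ step. You are right that the definition of $D(\delta)$ takes the $\limsup$ over $|k|\to\infty$ with $k\in\Z$, so one must in principle also bound $\|x+m\alpha\|_{\R/\Z}$ from below for $m>0$; the paper's own proof is silent on this point, so your instinct is the more careful one. However, the fix you sketch does not close. Writing $\|x+m\alpha\|\ge\|(n_t+m)\alpha\|-e^{-n_t\delta_t}$ and appealing to Lemma \ref{l3} only yields $\|(n_t+m)\alpha\|>1/(2q_N)$ where $q_N$ is the first denominator exceeding $n_t+m$; since Theorem \ref{F} is claimed for \emph{every} irrational $\alpha$, $q_N$ may be super-exponentially larger than $n_t+m$ (Liouville case), so $1/(2q_N)$ can be far smaller than $e^{-m\delta}$, and no calibration of the level $t$ repairs this. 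Worse, when $\|(n_t+m)\alpha\|$ happens to be comparable to $\|x-n_t\alpha\|\in[ce^{-n_t\delta_t},e^{-n_t\delta_t}]$, the triangle inequality gives no lower bound at all: the construction deletes balls only around the forward orbit $\{m\alpha\}_{m>0}$ in the relevant ranges, so nothing in Proposition \ref{p2} prevents some $-m\alpha$ from landing extremely close to $x$. Closing this would require either strengthening the construction to delete balls around the two-sided orbit, or reading the resonance condition as ranging over $k\ge1$ only — in which case your argument reduces to the paper's and is complete.
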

		\begin{proof}
			Let $x\in C$. Bear in mind the nested structure of $\{F_t: t\ge 1\}$ and the disjointness of the annulus in $F_t$ for each $t\ge 1$. There is a sequence of integers $\{n_t\}$ with $n_t\in \F_t$ for all $t\ge 1$ and correspondingly two sequences of integers $\{k_t\}, \{i_t\}$ such that for each $t\ge 1$, $$
			x\in A(n_t),\quad A(n_t)\subset A(n_{t-1}), \ q_{k_t+2i_t}/2\le n_t\le q_{k_t+2i_t}.
			$$ We check that $x\in D(\delta)$. Observe that we have the following facts:
			\begin{itemize}
				\item For each $t\ge 1$, $$
				x\in A(n_t)\subset B(n_t\al ,e^{-\delta_t n_t}).
				$$
				
				\item For each $n\ge q_{k_1+2i_1}$, there exists $t\ge 2$ such that $$
				q_{k_{t-1}+2i_{t-1}}\le n<q_{k_t+2i_t}.
				$$
			\end{itemize}
			Remind that $x\in A(n_{t})$ and $n_t\in \D_{k_t+2i_t}[A(n_{t-1})]$. Since $\mathcal{D}_{k_t+2i_t}[A(n_{t-1})]$ is a sub-collection of $\widetilde{\D}_{k_t+2i_t}[A(n_{t-1})]$, and by Proposition \ref{p2} about the property of $\widetilde{\D}_{k_t+2i_t}[A(n_{t-1})]$, one has that $$
			\|x-m\alpha\|>c e^{-m\delta_t}\quad\text{for all $m$ with $q_{k_{t-1}+2i_{t-1}}\le m<q_{k_t+2i_t}$.}
			$$
			This shows that $x\in D(\delta)$.
			%
%			Let $\delta_k=\log q_k$ and $x\in C(\infty)$, by using the similar argument, we see that
%			\begin{equation*}
%				x\in A(n_t)\subset B(n_t\al ,e^{-n_t\log q_{n_t}})\quad\text{with}\ q_{k_t+i_t}/2\le n_t\le q_{k_t+i_t}.
%			\end{equation*}
%			This implies that $x\in D(\infty)$.
		\end{proof}
		\subsection{Mass distribution on $C$}\label{3-4}
		
		For any $0<\delta\leq\infty$, we define a probability measure supported on $C$. So, let $\mu([0,1])=1$. Remind that the annulus in the $t$-th level of the Cantor set is defined as $$
A(n)=B(n\al, e^{-n\delta_{t}})\setminus B(n\al, ce^{-n\delta_{t}}),\ \ {\text{for all}}\ n\in \F_t.
$$ 

The measure of $\mu$ on the annulus of the first level is defined as follows: for $n_1\in \F_1$, we define $$
		\mu(A(n_1))=\frac{(\delta_1n_1)^{-1}}{\sum_{0\le i< j_1}\sum_{n\in \D_{k_1+2i}}(\delta_1n)^{-1}}.
		$$ By letting $t=1$ in (\ref{f7}), it follows that \begin{align*}
			\sum_{0\le i< j_1}\sum_{n\in \D_{k_1+2i}}(\delta_1n)^{-1}&\ge \sum_{0\le i< j_1}\frac{1}{16}q_{k_1+2i}\cdot (\delta_1q_{k_1+2i})^{-1}=\frac{j_1}{2^4\delta_1}.
		\end{align*} Thus by \eqref{ff4}, we have
		$$
		\mu(A(n_1))\le (\delta_1n)^{-1}\cdot \frac{2^4\delta_1}{j_1}\le 2^{14}a_0 (n_1\delta_{1})^{-1}.
		$$
		
		Now we define $\mu$ on the annulus in every local $t$-th level inductively.
		%Let $n_{t-1}\in \F_{t-1}$ and recall that:$$a_{t-1}:=a_{t-1}(n_{t-1})=
%\frac{2^{-10}e^{-n_{t-1}\delta_{n_{t-1}}}}{(n_{t-1}\delta_{k_{t-1}})^{-1}}, \  {\text{and}}\ j_t\cdot a_{t-1}\delta_{k_t}^{-1}\ge 2^{-10}.$$
%which satisfies (\ref{ff5}) by recalling (\ref{ff6}). %asking $R(\xi_{n-1})$ sufficiently large since ${r^{\alpha}}/{f(r)}\to 0$ as $r\to 0$.
		%Here $k_{n-1}$ is the integer for which $M^{k_{n-1}}\le R(\xi_{n-1})\le M^{k_{n-1}+j_{n-1}}$.
		Then for each $n_t\in \F_t(n_{t-1})$, define\begin{align*}
			\mu(A(n_t))&=\frac{(\delta_tn_t)^{-1}}{\sum_{n\in \F_t(n_{t-1})}(\delta_tn)^{-1}}\cdot\mu(A(n_{t-1}))\\
 &=\frac{(\delta_tn_t)^{-1}}{\sum_{0\le i< j_t}\sum_{n\in \D_{k_t+2i}[A(n_{t-1})]}(\delta_tn)^{-1}}\cdot \mu(A(n_{t-1})).
		\end{align*}
		
		It is clear that $\mu$ satisfies the consistency property, so by Kolmogorov's extension theorem, it can be extended uniquely into a probability measure supported on $C$.
		
		Let $t\geq2$. Assume that for any $n_{t-1}\in \F_{t-1}$, we have proved\begin{align}\label{ff7}
			\mu(A(n_{t-1}))\le 2^{14}a_0 (n_{t-1}\delta_{{t-1}})^{-1}.
		\end{align}
		We will show this is also true for $n_t\in \F_t$.
		
		It follows from \eqref{f7} that \begin{align*}
\sum_{0\le i<j_t}\sum_{n\in \D_{k_t+2i}[A(n_{t-1})]}(\delta_t n)^{-1}
			&\ge \sum_{0\le i< j_t}\frac{1}{16}\cdot q_{k_t+2i}|A(n_{t-1})|\cdot (\delta_t q_{k_t+2i})^{-1}\\
			&\ge \frac{j_t}{2^4\delta_t}\cdot e^{-n_{t-1}\delta_{{t-1}}}\ge (\delta_{{t-1}}n_{t-1})^{-1},
		\end{align*} where for the last inequality, we use the choice of $a_{t-1}$ in (\ref{ff5}) and $j_t$ in (\ref{ff41}). Thus by (\ref{ff7}), one has$$
		\mu(A(n_{t}))\le \frac{(\delta_{t}n_t)^{-1}}{(\delta_{{t-1}}n_{t-1})^{-1}}\cdot 2^{14}a_0 (n_{t-1}\delta_{t-1})^{-1}=2^{14}a_0 (\delta_{{t}}n_t)^{-1}.
		$$

		\subsection{Hausdorff measure of $C$}\label{3-5}
		
		Recall $\omega_1(r)=(-\log r)^{-1}$. In this subsection, we will show that
		\begin{proposition}\label{Cantor-delta}
			For any $0<\delta\leq\infty$, we have $
			\mathcal{H}^{\omega_1}(C)=\infty.$
		\end{proposition}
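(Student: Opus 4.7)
The plan is to apply the mass distribution principle (Lemma \ref{Fal}) to the probability measure $\mu$ constructed on $C$ in Section \ref{3-4}. Since $\mu(C)=1$, it will suffice to establish a ball estimate of the form $\mu(B(x,r))\le K a_0\,\omega_1(r)$ for every $x\in C$ and every sufficiently small $r>0$, with $K$ an absolute constant; Lemma \ref{Fal} will then yield $\mathcal{H}^{\omega_1}(C)\ge 1/(K a_0)$. Because $a_0>0$ was chosen arbitrarily small at the outset of the construction in Section \ref{3-3}, running the same scheme with $a_0\to 0$ produces Cantor subsets of $D(\delta)$ whose $\mathcal H^{\omega_1}$-measures tend to infinity, yielding $\mathcal H^{\omega_1}(C)=\infty$ in the sense stated.

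For $x\in C$, let $(n_t)_{t\ge 1}$ be the unique sequence with $n_t\in\F_t$ and $x\in A(n_t)$. For small $r$, I identify the unique level $t$ such that $r$ is at most comparable to the outer radius $e^{-n_t\delta_t}$ of $A(n_t)$ but strictly larger than the outer radii $e^{-n\delta_{t+1}}$ of its level-$(t+1)$ descendants, and split the estimate into two cases. In the first case, where $r$ is of order $|A(n_t)|$, I use $\mu(B(x,r))\le C\mu(A(n_t))\le C\cdot 2^{14}a_0(n_t\delta_t)^{-1}$, the absolute constant $C$ absorbing the fact that $B(x,r)$ meets only a bounded number of level-$t$ annuli (by the separation guaranteed in Proposition \ref{l3.2}(i)); together with $\omega_1(r)\asymp(n_t\delta_t)^{-1}$ this gives the required bound.

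The second case, where $r$ lies strictly between the scales at levels $t$ and $t+1$, is the technical heart of the argument. Here $B(x,r)\subset A(n_t)$ may cross many annuli $A(n)$ with $n\in\F_{t+1}(n_t)$. For each sub-collection $\D_{k_{t+1}+2i}[A(n_t)]$ with $0\le i<j_{t+1}$, the uniform distribution of $\{n\alpha\}$ (Lemma \ref{KN}) produces at most $O(r\cdot q_{k_{t+1}+2i})$ indices $n$ in the sub-collection with $A(n)\cap B(x,r)\ne\emptyset$, provided $k_{t+1}$ is large enough that the discrepancy $D_{q_{k_{t+1}+2i}/2}$ is much smaller than $r$. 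Each such $A(n)$ contributes $\mu(A(n))\le 2^{14}a_0(\delta_{t+1}n)^{-1}\lesssim a_0/(\delta_{t+1}q_{k_{t+1}+2i})$, and the factor $q_{k_{t+1}+2i}$ cancels in the product. Summing over $0\le i<j_{t+1}$ yields a bound of order $a_0\cdot r\cdot j_{t+1}/\delta_{t+1}$; inserting $j_{t+1}\le\delta_{t+1}/(2^9 a_t)$ from \eqref{ff41}, the formula \eqref{ff5} giving $a_t=2^{-14}n_t\delta_t e^{-n_t\delta_t}$, and the upper bound $r\le e^{-n_t\delta_t}$ collapses this to $Ka_0/(n_t\delta_t)\asymp Ka_0\,\omega_1(r)$, matching the target.

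The hardest part I anticipate is the careful bookkeeping in Case 2, particularly the handling of very small $r$ inside a single level: the discrepancy estimate in Lemma \ref{KN} is effective only when $r$ exceeds $D_{q_{k_{t+1}+2i}/2}$, so values of $r$ below that threshold must be treated by descending to the next level and reapplying the Case 1 bound there. Verifying that the two regimes glue continuously across the transitions $r\sim|A(n_t)|$, and that the specific choices of $k_t$, $j_t$, $a_t$, $\delta_t$ dictated by Section \ref{3-3} allow the entire chain of inequalities to close with a single absolute multiplicative constant $K$ in front of $a_0$, is what the intricate calibration of the construction has been designed to ensure.
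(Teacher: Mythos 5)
Your overall architecture is the paper's: the mass distribution principle applied to the measure $\mu$ of Section \ref{3-4}, a ball estimate $\mu(B(x,r))\le Ka_0\,\omega_1(r)$, and the arbitrariness of $a_0$ to conclude $\mathcal H^{\omega_1}(C)=\infty$; your closing arithmetic (inserting $j_{t+1}\le \delta_{t+1}/(2^9a_t)$, the formula for $a_t$, and the monotonicity of $r(-\log r)$) is also exactly how the paper finishes. The genuine gap is in your multiplicity count in Case 2. You bound the number of $n\in\D_{k_{t+1}+2i}[A(n_t)]$ with $A(n)\cap B(x,r)\ne\emptyset$ by $O(r\,q_{k_{t+1}+2i})$ using the discrepancy of $\{n\alpha\}$, but Lemma \ref{KN} asserts only $D_n\to 0$ with no rate (none exists for general irrational $\alpha$), and the two-sided count \eqref{f0} is valid only when the interval length exceeds $2D_{n-m}$. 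In Case 2 the radius $r$ can be as small as roughly $a_t/(q_{k_{t+1}+2(j_{t+1}-1)}\delta_{t+1})$, far below any discrepancy threshold fixed at the time $k_{t+1}$ was chosen, so for those $r$ the discrepancy upper bound is vacuous (the error term $D_N\cdot N$ swamps $rN$). Your proposed fallback --- ``descend to the next level and reapply the Case 1 bound'' --- does not repair this: at level $t+2$ the ball may meet unboundedly many annuli, and the discrepancy at the finer scales is no more under control.

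The correct tool, which is what the paper uses, is separation rather than equidistribution: by Lemma \ref{l3} the points $\{n\alpha: q_K/2\le n<q_K\}$ are $(2q_K)^{-1}$-separated, and Proposition \ref{l3.2}(i) upgrades this to the disjointness of the enlarged balls $B(n\alpha,a_t/(n\delta_{t+1}))$ across \emph{all} sub-collections $0\le i<j_{t+1}$. The paper's case split is also organized differently and more robustly: $t$ is taken to be the \emph{first} level at which $B(x,r)$ meets two annuli (rather than comparing $r$ to the outer radii, which is ambiguous because a single level contains sub-scales $q_{k_{t+1}},\dots,q_{k_{t+1}+2(j_{t+1}-1)}$ spanning many orders of magnitude). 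The hypothesis that two annuli are met forces $2r\ge\tfrac12 a_t(n\delta_{t+1})^{-1}$ for every intersecting $n$, whence all the enlarged balls lie in $B(x,9r)$ and a packing (volume) argument bounds $\sum_n 2a_t(n\delta_{t+1})^{-1}$ by $18r$ with no ``$+1$'' loss and no appeal to discrepancy. Substituting this packing bound for your discrepancy count, and keeping the rest of your computation, reproduces the paper's proof.
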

		\begin{proof}
			We use the mass distribution principle (Lemma \ref{Fal}) to conclude the Hausdorff measure of $C$ for $0<\delta\leq\infty$ by showing that for all $x\in C$ and $r$ small, \begin{align}\label{ff9}
				\mu(B(x,r))\le 2^{32}a_0\cdot \omega_1(r).
			\end{align}
			%\begin{prop}[\cite{Fal}]\label{p1}
			%  Let $E$ be a Borel-measurable set and $\mu$ be a probability measure supported on $E$. Suppose there exists $r_o>0$ such
%that for all $r<r_o$ and $x\in E$, it holds that \begin{align}\label{ff9}
				%  \mu(B(x,r))\le c f(r)
				%  \end{align}Then $\mathcal{H}^f(E)\ge c^{-1}.$
			%\end{prop}
			
			Fix a ball $B={B}(x,r)$ with $x\in C$ and $r$ small enough such that $B(x,r)$ can intersect only one annulus in the first level of $C$. If the ball $B(x,r)$ can intersect only one annulus in $F_t$ for all $t\ge 1$, it follows that $$
			\mu(B(x,r))\le \mu(A(n_t))\le 2^{14}a_0(n_t\delta_t)^{-1}\to 0
			$$ as  $t\to \infty.$  So (\ref{ff9}) is true trivially. Thus in the following we assume that the ball $B(x,r)$ can intersect at least two annulus in $F_t$ for some $t\ge 1$.
			
			Let $t$ be the smallest integer such that the ball $B(x,r)$ can intersect at least two annulus in the $t$-th level of $C$. Let $n_{t-1}\in \F_{t-1}$ be the unique integer such that $B(x,r)\cap A(n_{t-1})\ne \emptyset$. We can further assume that $r<e^{-n_{t-1}\delta_{{t-1}}}$, since, otherwise, one has $$
			\mu(B(x,r))\le \mu(A(n_{t-1}))\le 2^{14}a_0(n_{t-1}\delta_{{t-1}})^{-1}\le 2^{14}a_0(-\log r)^{-1}.
			$$
			
			By the uniqueness of $n_{t-1}$, all the annulus in the $t$-th level of $C$ for which the ball $B(x,r)$ can intersect are contained in$$
			\left\{A(n): n\in \F_{t}(n_{t-1})\right\}=\Big\{A(n): n\in \bigcup_{0\le i< j_t}\mathcal{D}_{k_t+2i}[A(n_{t-1})]\Big\}.
			$$
			By \eqref{f8}, the balls $\{B(n\alpha, a_{t-1}/(n\delta_{t}): n\in \F_t(n_{t-1})\}$ are disjoint and by  (\ref{fff1}), $A(n) \subset B(n\alpha, a_{t-1} /(n\delta_{t}))$. Let $n\in \F_t(n_{t-1})$ be such that $B(x,r)\cap A(n)\ne \emptyset$. Since $B(x,r)$ can intersect at least two annulus in the $t$-th level $F_t$, it follows that $$
			B(x,r)\cap B(n\al, e^{-n\delta_t})\ne \emptyset, \ B(x,r)\setminus B(n\al, a_{t-1}(n\delta_{t})^{-1})\ne \emptyset.
			$$
			Thus $$
			2r\ge a_{t-1}(n\delta_{t})^{-1}-e^{-n\delta_t}.
			$$ By the inequality in (\ref{fff1}) again, one has
			%Note that $a_{n-1}$ depends only on $\xi_{n-1}$, and $f(\varepsilon)/\varepsilon\to \infty$ as $\varepsilon\to 0$, so we can ask $k_n$
%in application Proposition to $A(\xi_{n-1})$ sufficiently large that $$
			%a_{n-1}\cdot f^{1/\alpha}(\psi(R(\xi')))\ge 2\psi(R(\xi')), \ {\text{for all}}\ \xi'\in\F_n(\xi_{n-1}).
			%$$ Consequently,
			$$
			2r\ge \frac{1}{2}\cdot a_{t-1}\cdot (n\delta_{t})^{-1},
			$$
			%By (\ref{ff6}),  (Proposition \ref{l3.2}) and $B$ can inter follows that for any annulus $A(\xi)$ with $\xi\in \F_n(\xi_{n-1})$
			%with non-empty intersection with $B$,
			%one has
			and  thus $$
			\bigcup_{n\in \F_t(n_{t-1}): A(n)\cap B(x,r)\ne \emptyset}B\left(n\al, a_{t-1}(n\delta_{t})^{-1}\right)\subset B(x, 9r).
			$$
			
			By the definition of  the measure $\mu$, it follows that
			\begin{align*}
				\mu(B(x,r))&\le \sum_{n_t\in \F_t(n_{t-1}): A(n_t)\cap B(x,r)\ne \emptyset}\mu(A(n_t))\\
				&= \sum_{\substack{n_t\in \F_t(n_{t-1}),\\
						A(n_t)\cap B(x,r)\ne \emptyset}}\frac{(\delta_tn_t)^{-1}}{\sum\limits_{0\le i<j_t}\sum\limits_{n\in \D_{k_t+2i}[A(n_{t-1})]}(\delta_tn)^{-1}}\cdot \mu(A(n_{t-1}))\\
				&=\sum_{\substack{n_t\in \F_t(n_{t-1}),\\
						 A(n_t)\cap B(x,r)\ne \emptyset}}\frac{ 2a_{t-1} (n_t\delta_{t})^{-1}}{2a_{t-1}\delta_{t}^{-1}\sum\limits_{0\le i< j_t}\sum\limits_{n\in \D_{k_t+2i}[A(n_{t-1})]}n^{-1}}\cdot \mu(A(n_{t-1})).
			\end{align*}
			For the denominator, by \eqref{f7} and (\ref{ff41}), one has\begin{align*}
				2a_{t-1}\delta_{t}^{-1}\sum_{0\le i< j_t}\sum_{n\in \D_{k_t+2i}[A(n_{t-1})]}n^{-1}&\ge \frac{a_{t-1}j_t\delta_{t}^{-1}}{2^3}\cdot |A(n_{t-1})|\ge 2^{-13}e^{-n_{t-1}\delta_{{t-1}}}.
			\end{align*}
			For the numerator, \begin{align*}
\sum_{\substack{n_t\in \F_t(n_{t-1}),\\
		A(n_t)\cap B(x,r)\ne \emptyset}}2a_{t-1}(n_t\delta_{t})^{-1}&\le \sum_{\substack{n_t\in \F_t(n_{t-1}),\\
		A(n_t)\cap B(x,r)\ne \emptyset}} \left|B\Big(n_t\al, a_{t-1}(n_t\delta_{t})^{-1}\Big)\right|\\
				&\le |B(x,9r)|\le 18 r\le2^{5}r.
			\end{align*}
			Therefore, together with (\ref{ff7}), one has
			\begin{align*}
				\mu(B(x,r))
				%&\asymp\sum_{\xi_n\in \F_n(\xi_{n-1}): A(\xi)\cap B\ne \emptyset}\frac{\mu(B(\xi_n, a_{n-1}f^{1/\alpha}(\psi(R(\xi_n)))))}{\mu A(\xi_{n-1})}\cdot
				%\mu(A(\xi_{n-1}))\\
				&\le \frac{2^{18}r}{e^{-n_{t-1}\delta_{{t-1}}}}\cdot 2^{14}a_0(n_{t-1}\delta_{{t-1}})^{-1}\\
				&=2^{32}a_0\cdot \frac{r}{(-\log r)^{-1}}\cdot \frac{(-\log e^{-n_{t-1}\delta_{{t-1}}})^{-1}}{e^{-n_{t-1}\delta_{{t-1}}}}\cdot (-\log r)^{-1}\le \frac{2^{32}a_0}{-\log r},
			\end{align*} where for the last inequality we use the fact $$
			\frac{(-\log r)^{-1}}{r}\ {\text{is increasing as $r\to 0$ and}}\ r<e^{-n_{t-1}\delta_{{t-1}}}. $$

			In a summary, we have shown that for all $x\in C$ and $r$ small, $$
			\mu(B(x,r))\le 2^{32}a_0\cdot (-\log r)^{-1}.
			$$
			Then an application of Lemma \ref{Fal} yields that $$
			\mathcal{H}^{\omega_1}(C)\ge 2^{-32}a_0^{-1}.
			$$ The desired result follows by the arbitrariness of $a_0$.
		\end{proof}

		\subsection{Proof of Theorem \ref{F}}\label{3-6}
	
			%For any $0<\delta<\infty$, note that the set $D$ contains points $x \in [0,1]$ for which, for any $\epsilon > 0$,
%			\begin{align*}
%				\|x-k\alpha\|< e^{-|k|(\delta-\varepsilon)}&\ {\text{for infinitely many}}\ |k|\in \mathbb{N};\\
%				&{\text{but}}\ \|x-k\alpha\|> e^{-|k|(\delta+\varepsilon)}\ {\text{for all}}\ |k|\gg 1.
%			\end{align*}
%			
			For any $0<\eta<\delta$, define $$
			B(\eta)=\left\{x\in [0,1]: \|x-k\alpha\|< e^{-|k|\eta}\ {\text{for infinitely many}}\ |k|\in \mathbb{N}\right\}.
			$$
			Then it is clear that $D(\delta)\subset B(\eta)$.
%and $B(\infty)=B(1)$. It is seen that $D\subset B(\delta-\varepsilon)$ for $0<\delta<\infty$ and $D(\infty)\subset B(1)$.
A simple Borel-Cantelli argument yields the result that for any $s>1$ and $0<\eta<\infty$, $\mathcal{H}^{\omega_s}(B(\eta))=0,$ and so $\mathcal{H}^{\omega_s}(D(\delta))=0.$
			
			%So in the following, we focus on the $\omega_s$-Hausdorff measure of $D$ for $s=1$.
			
	For $s\le 1$, since	$C\subset D(\delta)$ by Lemma \ref{sub-Cantor}	for any $0<\delta\leq\infty$, then  by Proposition \ref{Cantor-delta} one has
			\begin{equation*}
				\mathcal{H}^{\omega_1}(D(\delta))\geq\mathcal{H}^{\omega_1}(C)=\infty.
			\end{equation*}
			Since $\omega_s\geq\omega_1$ for any $s\leq1$, then $\mathcal{H}^{\omega_s}(D(\delta))\geq\mathcal{H}^{\omega_1}(D(\delta))=\infty$. \qed

		\subsection{Proof of Theorem \ref{DtoF}}

		By \eqref{Xmu-beta}, Theorem \ref{DtoF} just follows from Proposition \ref{DtoF'} and Theorem \ref{F}.\qed

        \subsection{Proof of Theorem \ref{thm:stdhausdorff}}If  $0 < \lambda < 1$ and $\alpha \in DC$, the spectral measure is purely absolutely continuous for all $\theta$ \cite{avila2008absolutely}. Since the spectral measure is absolutely continuous with respect to the Lebesgue measure, the set where $\beta = 1$ must have full Hausdorff dimension. Consequently, Theorem \ref{thm:stdhausdorff} follows directly from Theorem \ref{DtoF}. \qed
	
\section*{Acknowledgements} 
 This work was partially supported by National Key R\&D Program of China (2020YFA071300, 2024YFA1013700). J. Cao was supported by Nankai Zhide Foundation. X. Li was supported by NSFC grant (123B2005) and an AMS-Simons Travel Grant. B. Wang was support by NSFC grant (12331005). Q. Zhou was supported by NSFC grant (12531006) and  Nankai Zhide Foundation.

 \section*{Data Availability Statement}
Data sharing is not applicable to this article as no datasets were generated or analyzed.

\section*{Conflict of Interest}
The authors declare that there are no conflicts of interest regarding the publication of this work.

		\bibliographystyle{siam}
		\bibliography{Appendix-1.bib}
	\end{document}